\title{Contests to Incentivize a Target Group}
\author{
    Edith Elkind, Abheek Ghosh, Paul W. Goldberg
}
\newtheorem{theorem}{Theorem}
\theoremstyle{definition}
\newtheorem{definition}{Definition}[section]
\newtheorem{example}{Example}[section]
\DeclareMathOperator{\bb}{\bm{b}}
\DeclareMathOperator{\bt}{\bm{t}}
\DeclareMathOperator{\bv}{\bm{v}}
\DeclareMathOperator{\bx}{\bm{x}}
\DeclareMathOperator{\by}{\bm{y}}
\DeclareMathOperator{\bE}{\mathbb{E}}
\DeclareMathOperator{\bP}{\mathbb{P}}
\DeclareMathOperator{\bR}{\mathbb{R}}
\DeclareMathOperator*{\argmax}{arg\,max}
\begin{document}

\maketitle

\begin{abstract}
    We study how to 
     incentivize agents in a target group to produce a higher output in the context of incomplete information,
     by means of rank-order allocation contests. We describe a symmetric Bayes--Nash equilibrium for contests that have two types of rank-based prizes: prizes that are accessible only to the agents in the target group; prizes that are accessible to everyone. We also specialize this equilibrium characterization to two important sub-cases: (i) contests that do not discriminate while awarding the prizes, i.e., only have prizes that are accessible to everyone; (ii) contests that have prize quotas for the groups, and each group can compete only for prizes in their share. For these models, we also study the properties of the contest that maximizes the expected total output by the agents in the target group.
\end{abstract}

\noindent 

\section{Introduction}
Contests---situations where multiple agents compete for valuable prizes---are prevalent in many real-life situations, including sports, R\&D races, crowdsourcing, college admissions and job recruitment.\footnote{See the book by Vojnovic~\shortcite{vojnovic2015contest} for an introduction to contest theory and relevant resources.} 

The dominant paradigm in contest theory is to assume that the principal designs the prize structure so as to incentivize the agents to produce a higher total output. In this model, the designer values the contribution from each agent equally, i.e., the marginal output produced by any agent contributes the same marginal value to the designer's objective. In this paper, we break away from this assumption: instead, we assume that the agents belong to two different groups and the designer may want to prioritize one group---the target group---over the other, non-target group.

Contests that give special importance to agents from a particular group are not uncommon in practice. Conferences give best student paper awards, in addition to best paper awards. 
Many competitions and hackathons are organized to elicit engagement from underrepresented groups: for example, recently Microsoft organized a hackathon as an intervention to get women interested in AI.\footnote{\url{https://news.microsoft.com/europe/features/bridging-the-stem-divide-ai-hackathon-helps-young-women-excel-in-computer-science/}} 
Contests are widely used for crowdsourcing, which is also an important source of training data for machine learning algorithms; in this context, eliciting input from disadvantaged groups is particularly important, as it helps the algorithms to learn decision-making rules that reflect opinions and preferences of such groups. Our work provides a better understanding of how to encourage contributors from such groups.

In this paper, we study an incomplete information (Bayesian) model. We assume that each agent is associated with an \textit{ability}, which captures the amount of output they can produce per unit effort. In an incomplete information model for contests, it is generally assumed that the abilities of the agents are selected i.i.d. from a given distribution $F$. In our model, based on the assumption that the agents belong to one of the two groups, the abilities of the agents in the target and the non-target group are sampled from 
two distinct distributions $F$ and $G$, respectively. Each agent belongs to the target group with a fixed probability $\mu$.\footnote{We assume $\mu$ is known based on historical data on the number of participants from the target group compared to the total number of participants. Similarly, the distributions $F$ and $G$ are also known from historical data.}
The agents know the prize allocation scheme, their own ability, and the prior distributions of other agents' abilities. They act strategically to maximize their expected utility, where the utility of an agent is the prize they receive minus the effort they exert, and reach a Bayes--Nash equilibrium. The contest designer knows the prior distributions of the agents' abilities; therefore, can reason about the equilibrium behavior of the agents. She wants to design the prize allocation scheme so as to elicit equilibrium behavior that optimizes her own objective. We assume that the contest designer has a budget of $1$ to use for the prizes.

The contest ranks the agents based on their outputs and awards the prizes based on the ranks. In our general model, we assume that contests have two sequences of prizes, one available to all agents, and the other available to the target group only. A conference sponsoring a best student paper award in addition to a best paper award is an example of such a contest. We also study two specific variants of this model. In the first variant, all prizes in the contest are equally accessible to all agents, and there are no group-specific prizes. One could come across such a design preference in situations where the contest designer does not want to discriminate among the agents, but still wants to incentivize an under-represented agent group. In this case, the contest does not discriminate while awarding the prizes, but the design of the prize structure (i.e., the value of the first prize, second prize, and so on) can incorporate the information about the ability distribution of the agents. This model captures an equal opportunity employer that treats all applicants equally, but has a preference to increase the diversity in the workplace. In the second variant, the contest has group-specific prizes only, with no prizes available to the entire population; examples include a hackathon for women in CS, or a hiring/scholarship contest with a strict group-specific quota.

We assume that the contest designer wants to maximize the expected output of the agents in the target group only. With minor changes, the techniques can be extended to an objective that maximizes the weighted sum of the group outputs and other similar objectives.

\subsection{Our Contribution}
We initiate a theoretical study of contest design to elicit higher participation by agents from under-represented groups and propose tractable models to do so. We analyze the equilibrium behavior of the agents for three cases: \textit{general prizes only}, where there is a sequence of rank-ordered prizes accessible to all agents, whether in the target or the non-target group; \textit{group-specific prizes only}, where there are two different sequences of prizes, each accessible to one of the groups; \textit{both general and target group-specific prizes}, where there are prizes accessible to both groups as well as prizes accessible to the target group only. The techniques used to analyze the equilibrium in these contests can be extended to related prize structures.

Based on these equilibrium characterizations, we study the properties of the contest that maximizes the total expected output of the agents from the target group. For the setting where all prizes are general, we prove that the optimal contest awards a prize of $1/\ell$ to the first $\ell$ agents and $0$ to the remaining agents, where $\ell$ is a function of the relative frequency of target group agents, $\mu$, and the ability distributions of the target group, $F$, and the non-target group, $G$. We give a closed-form formula for $\ell$. We also study the effect of first-order and second-order stochastic dominance of $F$ on $G$, and vice versa, on the optimal contest. For the setting where all prizes are group-specific, we show that it is optimal to award the full prize budget to the top-ranked agent, irrespective of the distributions $F$ and $G$. This result matches similar results on maximizing total output~\cite{moldovanu2001optimal}. We also compare general prizes and group-specific prizes, and show that either of these choices may be preferred depending upon the distributions $F$ and $G$, and also upon the frequency of the target group agents in the population, $\mu$, even if $F=G$.




\subsection{Related Work}
We discuss literature in contest theory directly relevant to our work; we point the readers to the book by Vojnovic~\shortcite{vojnovic2015contest} for a broader survey on contest theory.

Moldovanu and Sela~\shortcite{moldovanu2001optimal} characterize the Bayes--Nash equilibrium for contests with rank-order allocation of prizes assuming incomplete information with i.i.d. types; Chawla and Hartline~\shortcite{chawla2013auctions} prove the uniqueness of this equilibrium. Moldovanu and Sela~\shortcite{moldovanu2001optimal} also show that awarding the entire prize to the top-ranked agent is optimal when the agents have (weak) concave cost functions, but the optimal mechanism may have multiple prizes for convex cost functions. In contrast, we observe that even for linear cost functions, it may be beneficial to have multiple prizes to maximize the expected output from the target group. 

The standard assumption in mechanism design and its sub-area of contest design is that the agents' types are sampled i.i.d. from some distribution. Our paper extends the equilibrium analysis of Moldovanu and Sela~\shortcite{moldovanu2001optimal} to an asymmetric model where the agents are from two groups with different distributions. Amann and Leininger~\shortcite{amann1996asymmetric} study an asymmetric model with two agents with types sampled independently from two different distributions; our analysis uses the idea of the function `$k$' introduced by them, see Theorem~\ref{thm:eqBoth}. Characterizing equilibrium in contests with many agents with types sampled independently from different distributions has remained technically challenging; recently Olszewski and Siegel~\shortcite{olszewski2016large,olszewski2020performance} made progress by assuming very large numbers of agents and prizes, where an individual agent has infinitesimal effect on the equilibrium.

Bodoh-Creed and Hickman~\shortcite{bodoh2018college} model affirmative action in college admissions using contests. Their model has general non-linear utility and cost functions, but, like Olszewski and Siegel~\shortcite{olszewski2016large,olszewski2020performance},
they assume that the numbers of agents and prizes are very large. They give first-order equilibrium conditions and show that two types of affirmative actions---(i) admissions preference schemes, where the outputs of agents from the target group are (artificially) amplified before ranking and prize allocation, and (ii) quotas, where there are separate pools of seats for the different groups (similar to our group-specific prizes only model, Section~\ref{sec:eqTarget})---have the same sets of equilibria with identical actions by the agents and identical outcomes. Our model makes a stronger linearity assumption regarding the utility and the cost functions, which allows us to better characterize the equilibrium and 
study the optimal contest (prize) design problem.

There is a long line of literature on the effect of different types of affirmative actions on contests. This literature generally assumes that there is only one prize, which either gets allocated to the top-ranked agent or gets allocated proportionally (Tullock~\shortcite{tullock1980efficient} contests and their generalizations), and the contest designer introduces different kinds of interventions in the allocation of this prize. One such intervention is \textit{favoritism}, where the objective is to maximize the total output by giving head-starts or handicaps to certain agents (see, e.g., \cite{kirkegaard2012favoritism,fu2020optimal,deng2021optimally}). A head-start 
adds a bonus to an agent's output, while a handicap 
decreases an agent's score by a fixed percentage. See \cite{chowdhury2020heterogeneity} for a survey on contests and affirmative action.


Our work deviates from the widely studied objective of maximizing total output, 
by focusing on the output of a target group. Several papers have studied objectives other than the total output, such as maximum individual output~\cite{chawla2019optimal}, cumulative output from the top $k$ agents~\cite{archak2009optimal,gavious2014revenue}, total output of agents producing output in a given range~\cite{elkind2021contest}. \cite{elkind2021contest} targets agents based on their output level, and therefore, effectively, based on their abilities; our work targets agents based on their group association, with the assumption that exogenous factors may have caused agents in some groups to have acquired less ability.

There are also several papers that do equilibrium analysis and optimal contest design in the complete information setting (e.g.,~\cite{baye1996all,barut1998symmetric,siegel2009all}). 

\section{Model and Preliminaries}\label{sec:prelim}
Let there be $n$ agents. Any given agent belongs to the target (resp., non-target) group with probability $\mu$ (resp., $(1-\mu)$), independently of the other agents. Let $\bv = (v_1, v_2, \ldots, v_n)$ be the ability profile of the agents, where $v_i$ values are independent and identically distributed (i.i.d.) random variables from continuous and differentiable distributions $F$ or $G$ with support $[0,1]$, depending upon whether agent $i$ belongs to the target or the non-target group, respectively. Let $f$ and $g$ be the probability density functions (PDFs) of $F$ and $G$. The $n$ agents simultaneously produce the output profile $\bb = (b_i)_{i \in [n]}$, so that the cost of agent $i$ is $b_i/v_i$.

The contest awards two sequences of prizes $w_1 \ge w_2 \ge \ldots \ge w_n$ and $\omega_1 \ge \omega_2 \ge \ldots \ge \omega_n$, where the prize $w_j$ is given to the $j$-th ranked agent overall (we call these prizes \textit{general} prizes), and the prize $\omega_j$ is given to the $j$-th ranked agent among the agents in the target group (we call these \textit{group-specific} prizes), with ties broken uniformly.\footnote{In our model, it does not matter how we break ties, so w.l.o.g., we can assume uniform tie break. In more detail, for a tie to happen, two players must produce exactly identical output. In our setting, this happens with zero probability, as there are no point masses in the probability distributions $F$ and $G$, and the distributions of the output generated by the agents (that we derive as a result of our equilibrium analysis) also do not have any point masses.} We assume that $\sum_j w_j + \sum_j \omega_j \le 1$, i.e., the contest designer has a unit budget. Given a vector of outputs $\bb$, let the allocation of general prizes be given by $\bx(\bb) = (x_{i,j}(\bb))_{i,j \in [n]}$, where $x_{i,j} = 1$ if agent $i$ is awarded the $j$-th prize and $x_{i,j} = 0$ otherwise. Similarly, let $\by(\bb,\bt) = (y_{i,j}(\bb,\bt))_{i,j \in [n]}$ be the allocation of target group prizes, where $\bt = (t_i)_{i \in [n]}$ is the group label vector; $t_i \in \{T,N\}$, $t_i = T$ if the agent is in the target group, $t_i = N$ if not. We shall suppress the notation for $\bt$ and write $\by(\bb)$ instead of $\by(\bb,\bt)$.

If agent $i$ is in the target group, her utility is given by
\begin{multline}\label{eq:agentUtility1}
u_T(v_i, \bb, \bt) = \sum_{j \in [n]} w_j x_{i,j}(\bb) + \sum_{j \in [n]} \omega_j y_{i,j}(\bb) - b_i/v_i \\
\equiv v_i \left(\sum_{j \in [n]} w_j x_{i,j}(\bb) + \sum_{j \in [n]} \omega_j y_{i,j}(\bb)\right) - b_i,
\end{multline}
where the equivalence is true because scaling the utility function by a constant does not change the strategy of an agent. Similarly, the utility of an agent $i$ in the non-target group is given by
\begin{equation}\label{eq:agentUtility2}
u_N(v_i, \bb, \bt) = v_i \sum_{j \in [n]} w_j x_{i,j}(\bb) - b_i.
\end{equation}

\subsection{Mathematical Preliminaries}
Let $p^H_j(v)$ denote the probability that a value $v \in [0, 1]$ is the $j$-th highest among $n$ i.i.d. samples from a distribution $H$, given by the expression
\begin{equation*}
    p^H_j(v) = \binom{n-1}{j-1} H(v)^{n-j}(1-H(v))^{j-1}.
\end{equation*}

A key role in the Bayes--Nash equilibrium of rank-order allocation contests is played by the order statistics. Let $f^H_{n,j}$ be the PDF of the $j$-th highest order statistic out of $n$ i.i.d. samples from $H$ (with PDF $h$), given by the expression
\begin{equation}\label{eq:OrderStat}
    f^H_{n,j}(v) = \frac{n!}{(j-1)!(n-j)!}H(v)^{n-j}(1-H(v))^{j-1}h(v).
\end{equation}
We shall also frequently use the following two identities for order statistics:
\begin{align*}
    H(v) f^H_{n-1,j}(v) &= \frac{n-j}{n} f^H_{n,j}(v);\\
    (1-H(v)) f^H_{n-1,j}(v) &= \frac{j}{n} f^H_{n,j+1}(v).
\end{align*}

\begin{definition}[(Weak) First-Order Stochastic (FOS) Dominance]\label{def:fosd}
A distribution $F$ {\em FOS dominates} another distribution $G$, if for every $x$, $F$ gives at least as high a probability of receiving at least $x$ as does $G$: $1-F(x) \ge 1-G(x) \Longleftrightarrow F(x) \le G(x)$.
\end{definition}

\begin{definition}[(Weak) Second-Order Stochastic (SOS) Dominance]\label{def:sosd}
A distribution $F$ {\em SOS dominates} another distribution $G$, if for every $x$: $ \int_{-\infty}^x [F(t) - G(t)] dt \le 0$.
\end{definition}
FOS dominance implies SOS dominance. Another sufficient condition for SOS dominance: $F$ SOS dominates $G$ if $G$ is a mean-preserving spread of $F$.

Let us denote the inner product of two functions $\psi,\phi : [0,1] \rightarrow \bR_+$ by 
\[
    \langle \psi, \phi \rangle = \int_0^1 \psi(x) \phi(x) dx.
\]
When appropriately normalized, $\langle \psi, \phi \rangle$ measures the similarity between the functions $\psi$ and $\phi$.


\subsection{Equilibrium and Objective Function}
We shall study the Bayes--Nash equilibrium of the agents, where the agents in the target group use a symmetric strategy $\alpha(v)$ (symmetric across the agents in the group), where $\alpha(v)$ is the output of an agent with ability $v$; similarly, the agents in the non-target group use a symmetric strategy $\beta(v)$. The objective of the contest designer is to maximize the expected total output generated by the agents in the target group:
\begin{equation*}
    \sum_{i \in [n]} \bE_{v_i \sim F} [\alpha(v_i)] \bP[t_i = T] = n \cdot \mu \cdot \bE_{v_i \sim F} [\alpha(v_i)],
\end{equation*}
which is equivalent to maximizing $\bE_{v \sim F} [\alpha(v)]$ because $n$ and $\mu$ are (fixed) parameters given to the model.

\section{Equilibrium Analysis}\label{sec:eq}
In this section, we characterize the symmetric Bayes--Nash equilibrium (symmetric across the agents in a given group). We first solve for the equilibrium of the contest with both general and target group-specific prizes, then we specialize it for the two special cases: (1) only general prizes; (2) only group-specific prizes.

\subsection{Both Target Group and General Prizes}\label{sec:eqBoth}
In the next theorem, we focus on contests that have a sequence of general prizes $(w_j)_{j \in [n]}$ and a sequence of target group-specific prizes $(\omega_j)_{j \in [n]}$. We will characterize a Bayes--Nash equilibrium where a player in the target group plays an action $\alpha(v)$ as a function of her ability $v \sim F$ and a player in the non-target group plays an action $\beta(v)$ as a function of her ability $v \sim G$. Note that all the players in the target group use the same strategy $\alpha$ and the players in the non-target group use the same strategy $\beta$.
\begin{theorem}\label{thm:eqBoth}
A contest with general prizes $(w_j)_{j \in [n]}$ and target group-specific prizes $(\omega_j)_{j \in [n]}$ has a Bayes--Nash equilibrium where an agent with ability $v \in [0,1]$ uses strategy $\alpha(v)$ if she is in the target group and strategy $\beta(v)$ if she is in the non-target group, where $\alpha(v)$ and $\beta(v)$ are defined as:
\begin{align*}
    \beta(v) &= \int_0^v (\mu f(k(y)) k'(y) + (1-\mu) g(y)) A(y) y d y,\\
    \alpha(v) &= \begin{cases} \beta(k^{-1}(v)), &\text{$0 \le v \le k(1)$} \\
                \beta(1) + \int_{k(1)}^{v} \mu f(y) C(y) y dy, &\text{$k(1) \le v \le 1$}
                \end{cases},
\end{align*}
where:
\begin{itemize}
    \item $k$ is a function defined over $[0,1]$ and is the solution to the following ordinary differential equation, with boundary condition $k(0) = 0$,
    \[
        k'(v)  = \frac{(1-\mu) g(v) (v - k(v))A(v)}{\mu f(k(v)) ( k(v)A(v) + k(v)B(v) - v A(v) )}.
    \]
    \item $A(v) = \sum_{j \in [n]} w_j \psi_j(\mu F(k(v)) + (1-\mu) G(v))$.
    \item $B(v) = \sum_{j \in [n]} \omega_j \psi_j(\mu F(k(v)) + (1-\mu))$.
    \item $C(v) = \sum_{j \in [n]} (w_j + \omega_j) \psi_j(\mu F(v) + (1-\mu))$.
    \item $\psi_j(x) = \binom{n-1}{j-1} x^{n-j-1}(1-x)^{j-2} ( (n-j) - (n-1) x)$.
\end{itemize}
\end{theorem}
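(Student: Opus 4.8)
The plan is to verify that the stated strategies form a symmetric \emph{increasing} equilibrium via the standard mimicking argument: assuming every opponent plays $(\alpha,\beta)$ and that both strategies are strictly increasing, an agent who produces output $b$ is effectively choosing to imitate the type whose equilibrium output equals $b$. I therefore parametrise each agent's deviation by a pretend-type and impose a first-order condition, then argue this stationary point is a global maximiser. Following Moldovanu and Sela, and the device of Amann and Leininger, the coupling between the two groups is carried by $k$, interpreted as the matching sending a non-target ability $v$ to the target ability $k(v)$ that produces the same output; thus $\alpha(k(v))=\beta(v)$ on $[0,k(1)]$, while target abilities $v\in[k(1),1]$ are ``unmatched'' in that they outproduce every non-target agent, which explains the two branches of $\alpha$.

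\textbf{Expected prizes via order statistics.} The first computation is to express each agent's expected prize as a function of her pretend-type. The key identity is
\[
\psi_j(x)=\frac{d}{dx}\,p_j(x),\qquad p_j(x)=\binom{n-1}{j-1}x^{\,n-j}(1-x)^{\,j-1},
\]
where $p_j(x)$ is the probability that an agent occupying position $x$ in the relevant (overall, or within-group) output distribution is ranked $j$-th. If a non-target agent imitates ability $v$, an opponent produces a smaller output with probability $\Phi_N(v):=\mu F(k(v))+(1-\mu)G(v)$, so her expected general prize is $\sum_j w_j\,p_j(\Phi_N(v))$ with marginal $A(v)\,\Phi_N'(v)$. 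A target agent imitating $v\le k(1)$ occupies overall position $\Phi_T(v):=\mu F(v)+(1-\mu)G(k^{-1}(v))$, which satisfies $\Phi_T(k(u))=\Phi_N(u)$, and is ranked $j$-th \emph{within} her group with probability $p_j(\mu F(v)+(1-\mu))$, since an opponent is a higher-ability target agent with probability $\mu(1-F(v))$. These two calculations identify $A$, $B$, $C$ as the marginal expected-prize densities in the statement.

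\textbf{The first-order conditions and the ODE for $k$.} Imposing $\partial_{\hat v}=0$ at $\hat v=v$ on the non-target utility $v\sum_j w_j\,p_j(\Phi_N(\hat v))-\beta(\hat v)$ gives $\beta'(v)=v\,\Phi_N'(v)A(v)=v\,(\mu f(k(v))k'(v)+(1-\mu)g(v))A(v)$, which integrates with $\beta(0)=0$ to the stated $\beta$. For a target agent of ability $w=k(u)\le k(1)$ the analogous condition expresses $\alpha'(k(u))$ through $A(u)$, $B(u)$ and $\Phi_T'(k(u))=\mu f(k(u))+(1-\mu)g(u)/k'(u)$; differentiating the matching identity yields $\alpha'(k(u))k'(u)=\beta'(u)$. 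Eliminating $\alpha'(k(u))$ between the two and collecting the terms proportional to $k'(u)$ reduces, after routine algebra, exactly to
\[
k'(u)=\frac{(1-\mu)g(u)(u-k(u))A(u)}{\mu f(k(u))\,(k(u)A(u)+k(u)B(u)-uA(u))},\qquad k(0)=0.
\]
On the unmatched range $v\in[k(1),1]$ the target agent beats every non-target agent with certainty, so her overall rank equals her within-group rank and both occur with probability $p_j(\mu F(v)+(1-\mu))$; the first-order condition becomes $\alpha'(v)=v\,\mu f(v)\,C(v)$, which integrates from $k(1)$ using the continuity $\alpha(k(1))=\beta(1)$ to the second branch of $\alpha$.

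\textbf{Main obstacle.} The calculus above is routine; the delicate part is showing these conditions give \emph{global} best responses, not merely stationary points. I would establish a single-crossing property of the pretend-type objective—its derivative passing from positive to negative exactly once, at the true type—and separately rule out deviations to outputs outside the equilibrium range, in particular a target agent imitating the unmatched regime (or vice versa). A secondary technical point is the well-posedness of the $k$-ODE at $v=0$, where the factor $f(k(v))$ in the denominator and $(u-k(u))$ in the numerator both vanish; there I would examine the limiting ratio to confirm a finite positive $k'(0)$, a strictly increasing solution, and $k(1)\le 1$ so that the case split defining $\alpha$ is consistent.
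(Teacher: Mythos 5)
Your proposal follows essentially the same route as the paper: the same first-order conditions (yours parametrised by pretend-type, the paper's by output level, which is an equivalent change of variables), the same Amann--Leininger coupling $k(v)=\alpha^{-1}(\beta(v))$ with the identical elimination of $\alpha'$ yielding the stated ODE, and the same two-branch integration for $\alpha$ with boundary condition $\alpha(k(1))=\beta(1)$. The ``main obstacle'' you defer is resolved in the paper by exactly the single-crossing argument you sketch --- writing the marginal prize rate as $\xi(x)$, one gets $v\,\xi(\alpha(v_1))-1>0$ and $v\,\xi(\alpha(v_2))-1<0$ for $v_1<v<v_2$, plus a right-continuity remark at $x=0$ for the point mass in the within-group distribution --- while the well-posedness of the ODE at $v=0$ that you flag is likewise only asserted, not proved, in the paper.
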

Before moving to the proof of Theorem~\ref{thm:eqBoth}, notice that Theorem~\ref{thm:eqBoth} does not provide a closed-form solution for $\alpha(v)$ and $\beta(v)$, it rather provides a characterization using the function $k(v)$ that relates $\alpha(v)$ and $\beta(v)$. Standard equilibrium analysis techniques using first-order equilibrium conditions will give a system of differential equations in $\alpha(v)$ and $\beta(v)$. Using the function $k(v) = \alpha^{-1}(\beta(v))$, we convert this system of differential equations in $\alpha(v)$ and $\beta(v)$ to a comparatively simpler single first-order explicit ordinary differential equation in only one dependent variable $k(v)$. After solving the ordinary differential equation for $k(v)$, we can use $k(v)$ to compute $\alpha(v)$ and $\beta(v)$ as described in the theorem statement. This technique of using the function $k(v)$ is motivated by the work of Amann and Leininger~\shortcite{amann1996asymmetric}, where they use to it analyze two-agent asymmetric contests.

\begin{proof}[Proof of Theorem~\ref{thm:eqBoth}]
Let us assume that $\alpha : [0,1] \rightarrow \bR_+$ and $\beta : [0,1] \rightarrow \bR_+$ are strictly increasing and (almost everywhere) differentiable functions and are the strategies used by the agents in the target group and the non-target group, respectively. We also assume that $\alpha(0) = \beta(0) = 0$, i.e., an agent with no ability doesn't produce any output, and that $\alpha(v) > 0$ and $\beta(v) > 0$ for $v > 0$, i.e., an agent with a non-zero ability produces a non-zero output. Let $\overline{\alpha} = \alpha(1)$ and $\overline{\beta} = \beta(1)$; we shall observe later in the proof that $\alpha(x) \ge \beta(x)$ for all $x$, and therefore, $\overline{\alpha} \ge \overline{\beta}$. Let the distribution of the output produced by an agent in the target group be denoted by $\widehat{F}$; $\widehat{F}$ and the corresponding PDF $\widehat{f}$ are given by:
\[\widehat{F}(x) = F(\alpha^{-1}(x)); \quad \widehat{f}(x) = f(\alpha^{-1}(x)) \frac{d \alpha^{-1}(x)}{dx},\] 
where $x \in [0,\overline{\alpha}]$.
Similarly, the distribution and the PDF of the output produced by a non-target group agent are:
\[\widehat{G}(x) = G(\beta^{-1}(x)); \quad \widehat{g}(x) = g(\beta^{-1}(x)) \frac{d \beta^{-1}(x)}{dx},\]
where $x \in [0,\overline{\beta}]$. Any given agent belongs to the target group with a probability $\mu$, so the overall distribution of the output produced by an agent is given by the following mixture distribution:
\begin{multline*}
    \widehat{H}(x) = \mu \widehat{F}(x) + (1-\mu) \widehat{G}(x) =\\
    \begin{cases} \mu F(\alpha^{-1}(x)) + (1-\mu) G(\beta^{-1}(x)), &\text{$0 \le x \le \overline{\beta}$} \\
    \mu F(\alpha^{-1}(x)) + (1-\mu), &\text{$\overline{\beta} \le x \le \overline{\alpha}$}\end{cases}.
\end{multline*}
For the general prizes $(w_j)_{j \in [n]}$, which are accessible to all agents, the probability that an agent who produces an output of $x$ receives the $j$-th prize is given by $p^{\widehat{H}}_j(x)$, which is equal to
\begin{equation*}
    p^{\widehat{H}}_j(x) = \binom{n-1}{j-1} \widehat{H}(x)^{n-j}(1-\widehat{H}(x))^{j-1}.
\end{equation*}
The target group-specific prizes $(\omega_j)_{j \in [n]}$ are allocated to only the agents in the target group. For these prizes, we can effectively assume that all the agents in the non-target group produce $<0$ output, and then find the probability that an agent in the target group has the $j$-th highest output. This output distribution, denoted by $\tilde{H}$, is given by:
\begin{equation*}
    \tilde{H}(x) = \mu \widehat{F}(x) + (1-\mu) = \mu F(\alpha^{-1}(x)) + (1-\mu), 
\end{equation*}
and the probability that the $x$ is $j$-th highest value is given by $p^{\tilde{H}}_j(x)$. Note that $\tilde{H}$ is not continuous and (its density function) has a point-mass of $(1-\mu)$ at $0$.

\textbf{First-order conditions.} 
We now derive the first-order equilibrium conditions. Let us focus on a particular agent $i$. Let us assume that the agent has an ability $v$ and produces an output $x$. All the other agents are playing with strategy $\alpha$ or $\beta$ depending upon their group. If agent $i$ is in the target group, his expected utility is:
\begin{equation}\label{eq:utility:target}
u_T(v, x) = v \left(\sum_{j \in [n]} w_j p^{\widehat{H}}_j(x)+ \sum_{j \in [n]} \omega_j p^{\tilde{H}}_j(x)\right) - x,
\end{equation}
while, if the agent is not in the target group, then his expected utility is:
\begin{equation}\label{eq:utility:non}
u_N(v, x) = v \sum_{j \in [n]} w_j p^{\widehat{H}}_j(x) - x.
\end{equation}
For a target group agent, the first-order condition implies that the derivative of $u_T(v, x)$ with respect to $x$, at $x = \alpha(v)$, is: $0$ if $x > 0$, and $\le 0$ if $x = 0$. For $x > 0$, we have:
\begin{multline}\label{eq:foc:general:1}
    \left. \frac{d u_T(v,x)}{dx} \right|_{v = \alpha^{-1}(x)} = 0 \implies \\
    \alpha^{-1}(x) \left(\sum_{j \in [n]} w_j \frac{d p^{\widehat{H}}_j(x)}{d x} + \sum_{j \in [n]} \omega_j \frac{d p^{\tilde{H}}_j(x)}{d x} \right) = 1.
\end{multline}
Similarly, for non-target group agent we have:
\begin{multline}\label{eq:foc:general:2}
    \left. \frac{d u_N(v,x)}{dx} \right|_{v = \beta^{-1}(x)} = 0 \implies \\
    \beta^{-1}(x) \sum_{j \in [n]} w_j \frac{d p^{\widehat{H}}_j(x)}{d x} = 1.
\end{multline}
From the two equations above, we can observe that for any $x$, $\alpha^{-1}(x) \le \beta^{-1}(x)$, because the prizes $w_j$'s and $\omega_j$'s and the derivatives of $p^{\widehat{H}}(x)$ and $p^{\tilde{H}}(x)$ are non-negative. $\alpha^{-1}(x) \le \beta^{-1}(x)$ implies $\alpha(v) \ge \beta(v)$ for any given $v$, so an agent in the target group produces more output compared to an agent in the non-target group with the same ability. We also get that $\alpha^{-1}(\beta(v)) \le \beta^{-1}(\beta(v)) = v$.

\textbf{Second-order conditions.}
Note that equation~\eqref{eq:foc:general:1} is valid for all values of $v > 0$ and $x = \alpha(v)$. Let $\xi(x)$ denote
\[
    \xi(x) = \left( \sum_{j \in [n]} w_j \frac{d p^{\widehat{H}}_j(x)}{d x} + \sum_{j \in [n]} \omega_j \frac{d p^{\tilde{H}}_j(x)}{d x} \right)
\]
Select any values $v_1, v, v_2$ such that $v_1 < v < v_2$. Evaluating equation~\eqref{eq:foc:general:1} at $v_1$ and $x_1 = \alpha(v_1)$, we get 
\begin{equation*}
    \left. \frac{d u_T(v_1,x)}{dx} \right|_{x = x_1} = v_1 \xi(x_1) - 1 = 0.
\end{equation*}
Note that $\xi(x)$ non-negative for every $x$ (can be deduced from the formula for $\xi(x)$, also done later in the proof). As $v > v_1$, replacing $v_1$ by $v$ in the equation above, we have
\begin{equation*}
    v \xi(x_1) - 1 >  v_1 \xi(x_1) - 1 = 0 = v \xi(x) - 1.
\end{equation*}
So, $v (\xi(x_1) - \xi(x)) > 0$ if $x_1 < x = \alpha(v)$. Similarly, let $x_2 = \alpha(v_2)$, and as $v < v_2$, we get
\begin{equation*}
    v \xi(x_2) - 1 <  v_2 \xi(x_2) - 1 = 0 = v \xi(x) - 1.
\end{equation*}
So, $v (\xi(x_2) - \xi(x)) < 0$ if $x_2 > x = \alpha(v)$. 
These two equations give us the second-order optimality condition. So, $x = \alpha(v)$ maximizes the utility of a target group agent compared to any other $x > 0$. 

Now, let us focus our attention to $x=0$. At $x = 0$, there is a discontinuity in the win probability, the probability of win is only right-continuous, because the non-target group players are not eligible for the target group-specific prizes, which happens with non-zero probability. But, right-continuity is enough to ensure that for any player with $v>0$, $x = \alpha(v)$ (satisfying the first-order conditions at $x$) is a better strategy for agent $v$ compared to any other $x \ge 0$. And, for a player with $v = 0$, $x=0$ is trivially an optimal strategy.

Similar arguments can be applied for non-target group agents to show that $x = \beta(v)$ is an equilibrium if it satisfies the first-order conditions.


\textbf{Equilibrium strategies.}
We now solve for the equilibrium strategies using the first-order conditions derived earlier. We separate into two cases, depending upon the value of $x$, the output.

\noindent \textbf{Case 1: $x \le \overline{\beta}$.}\\
Let us define a function $k(v)$ over the domain $[0,1]$ as $k(v) = \alpha^{-1}(\beta(v))$. Note that $k(\beta^{-1}(x)) = \alpha^{-1}(x)$, and $k(0)=0$ as per our assumption that $\alpha(0) = \beta(0) = 0$ and $\alpha(v) > 0$ and $\beta(v) > 0$ for $v > 0$. For $x \le \overline{\beta}$, replacing $\alpha^{-1}(x)$ by $k(\beta^{-1}(x))$ in $\widehat{H}(x)$, we get: 
\begin{multline*}
    \widehat{H}(x) = \mu F(\alpha^{-1}(x)) + (1-\mu) G(\beta^{-1}(x)) \\
        = \mu F(k(\beta^{-1}(x))) + (1-\mu) G(\beta^{-1}(x)),
\end{multline*}
which shows that $\widehat{H}(x)$ can be written as function of $k$ and $\beta$, without directly using $\alpha$. The same is true also for functions based on $\widehat{H}$, like $p^{\widehat{H}}_j(x)$ and $\frac{d p^{\widehat{H}}_j(x)}{d x}$. Differentiating $\widehat{H}(x)$ w.r.t. $x$ we get:
\begin{multline*}
    \frac{d \widehat{H}(x)}{d x} = (\mu f(k(\beta^{-1}(x))) k'(\beta^{-1}(x)) \\
        + (1-\mu) g(\beta^{-1}(x))) \frac{d \beta^{-1}(x)}{d x}.
\end{multline*}
Now, we do a change of variable, replacing $x$ by $\beta(v)$, we get:
\begin{equation*}
    \widehat{H}(\beta(v)) = \mu F(k(v)) + (1-\mu) G(v);
\end{equation*}
\begin{multline*}
    \frac{d \widehat{H}(\beta(v))}{d \beta(v)} = \frac{d \widehat{H}(\beta(v))}{d v} \frac{d v}{d \beta(v)} \\
        = (\mu f(k(v)) k'(v) + (1-\mu) g(v)) \frac{d v}{d \beta(v)}.
\end{multline*}
Similarly, we can write (by differentiating $\tilde{H}$ w.r.t. $x$, for $x > 0$):
\begin{align*}
    \tilde{H}(\beta(v)) &= (1-\mu) + \mu F(k(v)); \\
    \frac{d \tilde{H}(\beta(v))}{d \beta(v)} &= \frac{d \tilde{H}(\beta(v))}{d v} \frac{d v}{d \beta(v)} =  \mu f(k(v)) k'(v) \frac{d v}{d \beta(v)}.
\end{align*}

Observe from the equations above that $\widehat{H}(\beta(v))$ and $\tilde{H}(\beta(v))$ are functions of only $v$ and $k(v)$, and $\frac{d \widehat{H}(\beta(v))}{d v}$ and $\frac{d \tilde{H}(\beta(v))}{d v}$ are functions of only $v$, $k(v)$, and $k'(v)$, and do not directly dependent upon $\beta(v)$.

Expanding $\frac{d p^{\widehat{H}}_j(x)}{d x}$ and changing variable $x$ to $\beta(v)$, we get:
\begin{align*}
    \frac{d p^{\widehat{H}}_j(x)}{d x} &= \binom{n-1}{j-1} \widehat{H}(x)^{n-j-1}(1-\widehat{H}(x))^{j-2} \\
        &\qquad ( (n-j)(1-\widehat{H}(x)) - (j-1)\widehat{H}(x) )\frac{d \widehat{H}(x)}{d x}\\
    &= A_j(v) (\mu f(k(v)) k'(v) + (1-\mu) g(v)) \frac{d v}{d \beta(v)},
\end{align*}
where $A_j(v) = \psi_j(\widehat{H}(\beta(v)))$, where $\psi_j(x) = \binom{n-1}{j-1} x^{n-j-1}(1-x)^{j-2} ( (n-j) - (n-1) x)$. $A_j(v)$ is a function of $v$ and $k(v)$.
In a similar manner, we get: 
\begin{equation*}
    \frac{d p^{\tilde{H}}_j(x)}{d x} = B_j(v) \mu f(k(v)) k'(v) \frac{d v}{d \beta(v)},
\end{equation*}
where $B_j(v) = \psi_j(\tilde{H}(\beta(v)))$.

Replacing $\frac{d p^{\widehat{H}}_j(x)}{d x}$ and $\frac{d p^{\tilde{H}}_j(x)}{d x}$ in the first-order condition, equation (\ref{eq:foc:general:1}), and changing $x$ to $\beta(v)$, we get:
\begin{align*}
    \alpha^{-1}&(x) \left(\sum_{j \in [n]} w_j \frac{d p^{\widehat{H}}_j(x)}{d x} + \sum_{j \in [n]} \omega_j \frac{d p^{\tilde{H}}_j(x)}{d x} \right) = 1 \\
    \Longleftrightarrow& \alpha^{-1}(\beta(v)) \\
        &\qquad \left( \sum_{j \in [n]} w_j A_j(v) (\mu f(k(v)) k'(v) + (1-\mu) g(v)) \right. \\
        &\qquad\qquad \left. + \sum_{j \in [n]} \omega_j B_j(v) \mu f(k(v)) k'(v) \right) = \frac{d \beta(v)}{d v} \\
    \Longleftrightarrow&  k(v) k'(v) \mu f(k(v)) \left(\sum_{j \in [n]} w_j A_j(v) + \sum_{j \in [n]} \omega_j B_j(v) \right) \\
    &\qquad + k(v) (1-\mu) g(v) \sum_{j \in [n]} w_j A_j(v) = \frac{d \beta(v)}{d v}.
\end{align*}
Let  $A(v) = \sum_{j \in [n]} w_j A_j(v)$ and $ B(v) = \sum_{j \in [n]} \omega_j B_j(v)$; from equation above
\begin{multline}
    k(v) k'(v) \mu f(k(v)) ( A(v) + B(v) ) + k(v) (1-\mu) g(v) A(v) \\
    = \frac{d \beta(v)}{d v}.
\end{multline}
In a similar manner, from first-order condition (\ref{eq:foc:general:2}):
\begin{equation}\label{eq:foc:general:21}
    v k'(v) \mu f(k(v)) A(v) + v (1-\mu) g(v) A(v) = \frac{d \beta(v)}{d v}.
\end{equation}
Combining these two equations, we get:
\begin{equation}\label{eq:sol:general:1}
    k'(v)  = \frac{(1-\mu) g(v) (v - k(v))A(v)}{\mu f(k(v)) ( k(v)A(v) + k(v)B(v) - v A(v) )}.
\end{equation}
The expression above is a explicit first-order ordinary differential equation and can be calculated analytically based on the distributions $F$ and $G$. As per our assumptions on $F$ and $G$, and with boundary condition for $k(0) = 0$, there exists a unique solution to the equation above. 
After calculating the solution for $k(v)$, we can calculate $\beta(v)$ by using equation (\ref{eq:foc:general:21}):
\begin{equation}\label{eq:sol:general:2}
    \beta(v) = \int_0^v (\mu k'(y) f(k(y)) + (1-\mu) g(y)) A(y) y d y.
\end{equation}
Let us denote the maximum value that $k$ attains by $\overline{k} = k(1)$. Going back to the definition of $k$, $k(v) = \alpha^{-1}(\beta(v)) \implies k^{-1}(v) = \beta^{-1}(\alpha(v))$ for $v \in [0,\overline{k}]$. So, after calculating $k$ and $\beta$ as shown above, we can calculate $\alpha$ as:
\begin{equation}\label{eq:sol:general:3}
    \alpha(v) = \beta(k^{-1}(v)), \text{ for $v \in [0,\overline{k}]$}.
\end{equation}
Note that at $\overline{k}$, $\alpha(\overline{k}) = \beta(k^{-1}(\overline{k})) = \beta(1) = \overline{\beta}$, which is exactly the range we were working with: $0 \le x \le \overline{\beta}$.

\noindent\textbf{Case 2: $\overline{\beta} \le x \le \overline{\alpha}$.}\\
For $x \ge \overline{\beta}$, we have
\begin{align*}
    \widehat{H}(x) &= \mu F(\alpha^{-1}(x)) + (1-\mu) = \tilde{H}(x).
\end{align*}
As $\widehat{H}(x) = \tilde{H}(x)$ for $x \ge \overline{\beta}$, we also have $p_j^{\widehat{H}}(x) = p_j^{\tilde{H}}(x)$. As we did earlier, we do a change of variable from $x$ to $\alpha(v)$ to get
\begin{align*}
    \widehat{H}(\alpha(v)) &= \mu F(v) + (1-\mu); \ \frac{d\widehat{H}(\alpha(v))}{d \alpha(v)} = \mu f(v) \frac{d v}{d \alpha(v)}.
\end{align*}
Differentiating $p^{\widehat{H}}_j(x)$ w.r.t. $x$ and changing variable to $\alpha(v)$, we get
\begin{align*}
    \frac{d p^{\widehat{H}}_j(x)}{d x} &= \binom{n-1}{j-1} \widehat{H}(x)^{n-j-1}(1-\widehat{H}(x))^{j-2} \\
        &\qquad ( (n-j)(1-\widehat{H}(x)) - (j-1)\widehat{H}(x) )\frac{d \widehat{H}(x)}{d x}\\
    &= C_j(v) \mu f(v) \frac{d v}{d \alpha(v)},
\end{align*}
where $C_j(v) = \psi_j(\tilde{H}(\alpha(v)))$.

From the first order conditions, we get (note that $\widehat{H}(x) = \tilde{H}(x)$)
\begin{align*}
    &\quad \alpha^{-1}(x) \left(\sum_{j \in [n]} w_j \frac{d p^{\widehat{H}}_j(x)}{d x} + \sum_{j \in [n]} \omega_j \frac{d p^{\tilde{H}}_j(x)}{d x} \right) = 1 \\
    &\Longleftrightarrow v \sum_{j \in [n]} (w_j + \omega_j) C_j(v) \mu f(v) = \frac{d \alpha(v)}{d v}.
\end{align*}
Using this differential equation, along with the boundary condition $\alpha(\overline{k}) = \overline{\beta}$ that we derived earlier, we get
\begin{equation}
    \alpha(v) = \overline{\beta} + \int_{\overline{k}}^{v} y \sum_{j \in [n]} (w_j + \omega_j) C_j(y) \mu f(y) dy.
\end{equation}

\end{proof}

If the distributions $F$ or $G$, or the prizes $(w_j)_{j \in [n]}$ or $(\omega_j)_{j \in [n]}$, are non-trivial, then it may be intractable to calculate the analytical solutions for $\alpha(v)$ and $\beta(v)$ derived in Theorem~\ref{thm:eqBoth}, but numerical solutions may be computed. 
Example~\ref{ex:eq:1} in the appendix illustrates the use of Theorem~\ref{thm:eqBoth} to compute $\alpha(v)$ and $\beta(v)$ for a particular instantiation of the problem.

Theorem~\ref{thm:eqBoth} characterizes an equilibrium assuming that all players in a given group use the same strategy ($\alpha(v)$ or $\beta(v)$), and that $\alpha(v)$ and $\beta(v)$ are strictly increasing and (almost everywhere) differentiable functions of $v$. We believe that these assumptions are reasonable and the equilibrium is the most natural one for this model. It remains open to prove the uniqueness of this equilibrium or to characterize the set of all possible equilibria.

\subsection{General Prizes Only}\label{sec:eqGen}
A contest designer may want to only award prizes that are equally accessible to all the agents. This model is particularly appealing for several real-life applications, because, although the design of the prize structure may incorporate distributional information about the ability of the agents in the two groups, the contest itself is completely unbiased. 

In our model, this design choice corresponds to setting $\omega_j = 0$ for all $j \in [n]$. For this case, the expected utility for a target group agent and a non-target group agent for a given pair of ability $v$ and output $b$ is the same. We can specialize the equilibrium characterization in Theorem~\ref{thm:eqBoth} to this case as follows:
\begin{theorem}\label{thm:eqGen}
A contest with only general prizes $(w_j)_{j \in [n]}$ has a Bayes--Nash equilibrium where an agent with ability $v \in [0,1]$ in the target or non-target group uses the strategy $\alpha(v)$, where $\alpha(v)$ is defined as:
\begin{equation*}
    \alpha(v) = \sum_{j \in [n-1]} (w_j - w_{j+1}) \int_0^v y f^{\mu F + (1-\mu) G}_{n-1,j}(y) dy.
\end{equation*}
\end{theorem}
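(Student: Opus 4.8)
The plan is to obtain Theorem~\ref{thm:eqGen} as the $\omega_j \equiv 0$ specialization of Theorem~\ref{thm:eqBoth}. Setting $\omega_j = 0$ for all $j$ makes the utilities \eqref{eq:utility:target} and \eqref{eq:utility:non} of a target and a non-target agent with the same ability $v$ literally identical, so the two groups become indistinguishable and the natural symmetric equilibrium has $\alpha \equiv \beta$. In the language of Theorem~\ref{thm:eqBoth} this means $k(v) = \alpha^{-1}(\beta(v)) = v$. Conceptually, once all group-specific prizes vanish the whole problem collapses to a single-population rank-order contest in which all $n$ abilities are drawn i.i.d.\ from the mixture $H = \mu F + (1-\mu) G$ (density $h = \mu f + (1-\mu) g$), i.e.\ the classical symmetric setting of Moldovanu and Sela~\shortcite{moldovanu2001optimal}.

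Concretely, I would substitute $k \equiv v$ and $\omega_j = 0$ into the expression for $\beta$ from Theorem~\ref{thm:eqBoth}. Then $B \equiv 0$ and $A(v) = \sum_{j} w_j \psi_j(H(v))$, so that $\beta(v) = \int_0^v y\, h(y) \sum_{j} w_j \psi_j(H(y))\, dy$. Since $\psi_j(H(y)) h(y)$ is exactly the derivative of $p^H_j(y) = \binom{n-1}{j-1} H(y)^{n-j}(1-H(y))^{j-1}$, this equals $\alpha(v) = \int_0^v y \sum_{j} w_j \frac{d}{dy} p^H_j(y)\, dy$, which is also what the single first-order condition \eqref{eq:foc:general:2} gives directly in the symmetric case. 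The remaining task is algebraic: rewrite $\sum_j w_j p^H_j$ by summation by parts using $w_j = \sum_{\ell \ge j}(w_\ell - w_{\ell+1})$ with $w_{n+1} = 0$, turning it into $\sum_{\ell} (w_\ell - w_{\ell+1}) Q_\ell(v)$, where $Q_\ell(v) = \sum_{j \le \ell} p^H_j(v)$ is the probability of finishing in the top $\ell$.

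The key step is to identify $Q_\ell$ with an order statistic. Finishing in the top $\ell$ is equivalent to at most $\ell - 1$ of the other $n-1$ agents exceeding $v$, i.e.\ to the $\ell$-th highest of those $n-1$ draws being $\le v$; hence $Q_\ell$ is the CDF of the $\ell$-th highest order statistic of $n-1$ i.i.d.\ samples from $H$, giving $Q_\ell'(v) = f^H_{n-1,\ell}(v)$ for $\ell \le n-1$, while $Q_n \equiv 1$ makes its term drop out. Differentiating $\alpha$ and substituting then yields exactly $\alpha(v) = \sum_{j \in [n-1]} (w_j - w_{j+1}) \int_0^v y\, f^{\mu F + (1-\mu) G}_{n-1,j}(y)\, dy$.

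I expect the main subtlety to lie not in the algebra but in justifying $k \equiv v$. Plugging $k(v) = v$ and $\omega_j = 0$ into the ODE \eqref{eq:sol:general:1} produces the indeterminate form $0/0$, since both $v - k(v)$ and $k(v) A(v) + k(v) B(v) - v A(v)$ vanish, so one cannot simply read $k$ off from the differential equation. Instead $\alpha \equiv \beta$ should be argued from the symmetry of the utilities (or as the $\omega \to 0$ limit), and the symmetric first-order condition \eqref{eq:foc:general:2} re-derived directly to be safe. The order-statistic identity $Q_\ell' = f^H_{n-1,\ell}$ is the main computational step, but it is standard and its combinatorial interpretation makes it transparent.
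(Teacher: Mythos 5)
Your proposal is correct and takes essentially the same route as the paper: both obtain the result by specializing Theorem~\ref{thm:eqBoth} with $\omega_j = 0$, deducing $\alpha \equiv \beta$ and $k(v) = v$ from the symmetry of the utilities (not from the ODE, which as you rightly note degenerates to $0/0$), and then rewriting the resulting integrand $\sum_j w_j \psi_j(H(y))h(y)$ in terms of order statistics of $H = \mu F + (1-\mu)G$. The only difference is one of detail: you make explicit the summation-by-parts step $\sum_j w_j p^H_j = \sum_\ell (w_\ell - w_{\ell+1}) Q_\ell$ with $Q_\ell' = f^H_{n-1,\ell}$, which the paper compresses into the phrase ``rewriting $\alpha(v)$ using the PDF for the order statistic of $H$.''
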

\begin{proof}
This result is a specific instantiation of Theorem~\ref{thm:eqBoth}. Note that the utility functions are the same for both target group and non-target group agents, $u_T(v,x) = u_N(v,x)$ as given in equations~\eqref{eq:utility:target} and \eqref{eq:utility:non}. With reference to Theorem~\ref{thm:eqBoth} and its proof, we have $\alpha(v) = \beta(v)$ and $k(v) = \alpha^{-1}(\beta(v)) = v$, and therefore
\begin{multline*}
    \alpha(v) = \int_0^v y \sum_{j \in [n]} w_j \psi_j(\mu F(y) + (1-\mu) G(y)) \\
            (\mu f(y) + (1-\mu) g(y)) d y,
\end{multline*}
where $\psi_j(x) = \binom{n-1}{j-1} x^{n-j-1}(1-x)^{j-2} ( (n-j) - (n-1) x)$, as before.

Let $H(v) = \mu F(v) + (1-\mu) G(v)$. We have $\alpha(v) = $\\
$\int_0^v y \sum_{j \in [n]} w_j \psi_j(H(y)) h(y) d y$. Rewriting $\alpha(v)$ using the PDF for the order statistic of $H$,  $f^H_{n,j}$ (formula given in equation \eqref{eq:OrderStat}), we get
\[
    \alpha(v) = \sum_{j \in [n-1]} (w_j - w_{j+1}) \int_0^v y f^H_{n-1,j}(y) dy,
\]
as required.
\end{proof}

This equilibrium is unique, as follows from a result by Chawla and Hartline~\shortcite{chawla2013auctions}. They prove that if a contest is anonymous (does not discriminate among the agents) and the abilities of the agents are sampled i.i.d. from a distribution, then the equilibrium is unique. With only general prizes, the contest is (i) anonymous because the general prizes are awarded without any bias towards players from any group, and (ii) effectively, the abilities of the agents are sampled i.i.d. from the distribution $\mu F(v) + (1-\mu) G(v)$.

\subsection{Group-Specific Prizes Only}\label{sec:eqTarget}
In this case, there are no general prizes accessible to agents from both groups: all the prizes are allocated based on strict quotas for the groups. This model is similar to the model of seat quotas for college admissions studied by Bodoh-Creed and Hickman~\shortcite{bodoh2018college}. Technically, in this case we assume that $w_j = 0$ for all $j \in [n]$. Here we focus on the target group; a similar result also holds for the non-target group, if there are any prizes reserved for them.
\begin{theorem}\label{thm:eqTarget}
A contest with target only group-specific prizes, $(\omega_j)_{j \in [n]}$, has a Bayes--Nash equilibrium where an agent with ability $v \in [0,1]$ in the target group uses strategy $\alpha(v)$, where $\alpha(v)$ is defined as:
\begin{equation*}
    \alpha(v) = \sum_{j \in [n-1]} (\omega_j - \omega_{j+1}) \int_0^v y f^{\mu F + (1-\mu)}_{n-1,j}(y) dy.
\end{equation*}
\end{theorem}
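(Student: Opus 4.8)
The plan is to obtain Theorem~\ref{thm:eqTarget} as the specialization of Theorem~\ref{thm:eqBoth} in which all general prizes vanish, i.e.\ $w_j = 0$ for every $j \in [n]$. First I would record the immediate consequence that a non-target agent has expected utility $u_N(v,x) = -x$ (set $w_j=0$ in \eqref{eq:utility:non}), which is maximized at $x=0$; hence $\beta \equiv 0$ and $\overline{\beta} = 0$. I would flag here that one should \emph{not} try to read this off the $k$-ODE \eqref{eq:sol:general:1}: with $A(v)=\sum_j w_j A_j(v)=0$ both numerator and denominator vanish, so the equation degenerates to a $0/0$ form and is uninformative. Arguing $\beta \equiv 0$ directly sidesteps this.

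Since every non-target agent now produces output $0$, the overall output distribution coincides with the target-only distribution: $\widehat{H}(x) = \mu F(\alpha^{-1}(x)) + (1-\mu) = \tilde{H}(x)$ for all $x \ge 0 = \overline{\beta}$. This is exactly the regime handled by ``Case 2'' in the proof of Theorem~\ref{thm:eqBoth}, now valid on the whole interval $[0,1]$ rather than just $[\overline{\beta},\overline{\alpha}]$. Carrying that computation through with $w_j = 0$ gives $\alpha(v) = \int_0^v y\,\mu f(y)\, C(y)\,dy$, where $C(v) = \sum_{j\in[n]} \omega_j\, \psi_j(\mu F(v)+(1-\mu))$ and $\psi_j$ is as in the theorem statement; the boundary term $\overline{\beta}$ is $0$.

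The last step is purely algebraic and mirrors the end of the proof of Theorem~\ref{thm:eqGen}. Write $\Phi(v) = \mu F(v) + (1-\mu)$ for the ``effective'' CDF, whose density on $(0,1]$ is $\phi(v) = \Phi'(v) = \mu f(v)$; note that $\Phi$ carries a point mass $(1-\mu)$ at $0$, corresponding to the non-participating non-target agents. Because $\psi_j(\Phi)\phi = \frac{d}{dv} p_j^{\Phi}$, the same telescoping identity that converts $\sum_j w_j \psi_j(H)h$ into $\sum_{j\in[n-1]}(w_j-w_{j+1}) f^{H}_{n-1,j}$ applies verbatim with $H \mapsto \Phi$, $h \mapsto \phi$, and $w_j \mapsto \omega_j$. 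Substituting yields $\alpha(v) = \sum_{j\in[n-1]}(\omega_j-\omega_{j+1})\int_0^v y\, f^{\mu F+(1-\mu)}_{n-1,j}(y)\,dy$, as claimed.

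I expect the only real subtleties to be bookkeeping ones: confirming that the point mass of $\Phi$ at $0$ does not corrupt the integral (it does not, since the integrand carries a factor $y$ and the density $\phi$ enters only on $(0,1]$), and confirming that the first- and second-order equilibrium conditions still certify optimality despite the discontinuity of $\tilde{H}$ at $0$. The latter is already dispatched in the proof of Theorem~\ref{thm:eqBoth}, where right-continuity of the win probability at $x=0$ is shown to suffice, so I would simply invoke that argument rather than redo it.
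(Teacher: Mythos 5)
Your proposal is correct and follows essentially the same route as the paper: specialize Theorem~\ref{thm:eqBoth} to $w_j = 0$ for all $j$, conclude $\beta \equiv 0$ and $k \equiv 0$ so that $\alpha(v) = \int_0^v y\,\mu f(y)\, C(y)\, dy$, and then telescope into the order-statistic form exactly as at the end of the proof of Theorem~\ref{thm:eqGen}. Your additional observations---deriving $\beta \equiv 0$ directly from $u_N(v,x) = -x$ rather than from the degenerate $k$-ODE, and invoking the right-continuity argument at $x = 0$ to certify optimality---are sound refinements of steps the paper's proof passes over without comment.
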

\begin{proof}
Substituting $w_j = 0$ for every $j \in [n]$ in Theorem~\ref{thm:eqBoth}, we get $\beta(v) = 0$, $k(v) = 0$, and $C(v) = \sum_{j \in [n]} \omega_j \psi_j(\mu F(v) + (1-\mu))$ for $v \in [0,1]$, and therefore,
\[
    \alpha(v) = \int_{0}^{v} \mu f(y) C(y) y dy.
\]
Let $H(v) = \mu F(v) + (1-\mu)$. We have $\alpha(v) = $ \\
$\int_0^v y \sum_{j \in [n]} \omega_j \psi_j(H(y)) h(y) d y$. Rewriting $\alpha(v)$ using the PDF for the order statistic of $H$, $f^H_{n,j}$, we have
\[
    \alpha(v) = \sum_{j \in [n-1]} (\omega_j - \omega_{j+1}) \int_0^v y f^H_{n-1,j}(y) dy,
\]
as required.
\end{proof}

Chawla and Hartline~\shortcite{chawla2013auctions}'s uniqueness result also applies to this equilibrium characterization, as it did for only general prizes (Section~\ref{sec:eqGen}). We can transform 
an instance of our problem (without changing its set of equilibria) to satisfy these requirements: the abilities of the players are picked i.i.d. from the distribution $\mu F(v) + (1-\mu)$ and the contest awards prizes without discriminating among the players. 

\section{Designing Prizes to Incentivize the Target Group}\label{sec:des}
Based on the equilibrium characterizations in the previous section, in this section we investigate the properties of the optimal contest that maximizes the total output of the target group. We characterize the optimal contests for only general prizes and for only group-specific prizes. Characterization of the optimal contest for both general and group-specific prizes is a non-trivial open problem because of the absence of a closed-form equilibrium characterization for the case (the characterization in Theorem~\ref{thm:eqBoth} involves an ordinary differential equation).



In the appendix (Section~\ref{sec:desCompare}), 
we compare the choice between only general and only group-specific prizes; we observe that either of them may be a better choice depending upon the situation. Even if the distributions of the target and the non-target group are the same, the relative frequency of target group agents, measured by $\mu$, determines which prize allocation scheme is better. If $\mu$ is sufficiently high, then it is better to have only group-specific prizes, while the converse is true if $\mu$ is very low.

\subsection{General Prizes Only}\label{sec:desGen}
The strategy used by the agents in the target group is $\alpha(v)$, as derived in Theorem~\ref{thm:eqGen}. It can be rewritten as
\begin{align}
    \alpha(v) &= \sum_{j \in [n-1]} (w_j - w_{j+1}) \int_0^v y f^{\mu F + (1-\mu) G}_{n-1,j}(y) dy \nonumber \\
    &= \sum_{j \in [n-1]} \gamma_j \frac{1}{j} \int_0^v y f^{\mu F + (1-\mu) G}_{n-1,j}(y) dy,
\end{align}
where $\gamma_j = j (w_j - w_{j+1})$. Instead of optimizing over $(w_j)_{j \in [n]}$, we can optimize over $(\gamma_j)_{j \in [n-1]}$ with constraints $\gamma_j \ge 0$ and $\sum_j \gamma_j = 1$, to find the optimal contest that maximizes the target group's output.

\begin{theorem}\label{thm:desGen}
The contest with only general prizes that maximizes the expected total output of the agents in the target group awards a prize of $1/k^*$ to the $k^*$ top-ranked agents and a prize of $0$ to the remaining $n-k^*$ agents, where $k^*$ is defined as
\[
    k^* \in \argmax_{j \in [n-1]} \langle \psi , f^{\mu F + (1-\mu) G}_{n,j+1} \rangle
\]
for $\psi(x) = \frac{x (1-F(x))}{1- \mu F(x) - (1-\mu) G(x)}$.
\end{theorem}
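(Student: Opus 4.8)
The plan is to exploit the reparametrization introduced just before the statement. Writing $\gamma_j = j(w_j - w_{j+1})$, every admissible prize vector (nonnegative, non-increasing, and with $w_n = 0$ at the optimum, since a flat prize handed to everyone only wastes budget) corresponds to a point $\gamma$ of the simplex $\{\gamma_j \ge 0,\ \sum_j \gamma_j \le 1\}$, and the equilibrium strategy of Theorem~\ref{thm:eqGen} becomes $\alpha(v) = \sum_{j} \gamma_j \frac{1}{j}\int_0^v y\, f^H_{n-1,j}(y)\,dy$ with $H = \mu F + (1-\mu)G$. The designer's objective $\bE_{v\sim F}[\alpha(v)] = \int_0^1 \alpha(v) f(v)\,dv$ is then a \emph{linear} function of $\gamma$. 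First I would write it as $\sum_j \gamma_j c_j$ and read off the coefficients $c_j = \frac{1}{j}\int_0^1 f(v)\int_0^v y\, f^H_{n-1,j}(y)\,dy\,dv$.

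Next I would simplify $c_j$. Since the inner double integral runs over the triangle $\{0\le y \le v \le 1\}$, Fubini lets me swap the order of integration and integrate out $v$, giving $c_j = \frac{1}{j}\int_0^1 y(1-F(y))\, f^H_{n-1,j}(y)\,dy$. The key algebraic step is then the order-statistic identity $(1-H(y))\, f^H_{n-1,j}(y) = \frac{j}{n} f^H_{n,j+1}(y)$ from the preliminaries: substituting $f^H_{n-1,j}(y) = \frac{j}{n(1-H(y))} f^H_{n,j+1}(y)$ cancels the $\frac1j$ against the $\frac jn$ and yields $c_j = \frac{1}{n}\int_0^1 \frac{y(1-F(y))}{1-H(y)} f^H_{n,j+1}(y)\,dy = \frac{1}{n}\langle \psi, f^H_{n,j+1}\rangle$, with exactly the $\psi$ of the statement.

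With the objective in the form $\frac{1}{n}\sum_j \gamma_j \langle \psi, f^H_{n,j+1}\rangle$ over the simplex, the optimization is an elementary linear program. Because $\psi \ge 0$ on $[0,1]$ the coefficients are nonnegative, so the full budget is used ($\sum_j \gamma_j = 1$), and a linear functional on a simplex attains its maximum at a vertex $\gamma = e_{k^*}$; hence all the weight goes on the single index $k^*$ maximizing $\langle \psi, f^H_{n,j+1}\rangle$ (the constant $\frac1n$ being irrelevant to the $\argmax$). Finally I would translate the vertex $\gamma = e_{k^*}$ back through $w_j = \sum_{i \ge j} \gamma_i / i$ (with $w_n = 0$), obtaining $w_1 = \cdots = w_{k^*} = 1/k^*$ and $w_{k^*+1} = \cdots = w_n = 0$, the claimed structure.

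I expect the only real subtlety — rather than an obstacle — to be the bookkeeping of the reparametrization: justifying that restricting to $w_n = 0$ and to full budget is without loss (a constant prize awarded to every agent adds the same amount to each agent's expected reward and so cannot shift incentives, only squander budget), and confirming that $\gamma \mapsto w$ maps the simplex exactly onto the set of admissible prize vectors. The two analytic steps — the Fubini swap and the single use of the order-statistic identity — are routine once set up, and the linear-programming conclusion is then immediate.
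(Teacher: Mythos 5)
Your proof is correct and follows essentially the same route as the paper's: the $\gamma_j = j(w_j - w_{j+1})$ reparametrization, the Fubini swap yielding coefficients $\frac{1}{j}\int_0^1 y(1-F(y))\,f^{H}_{n-1,j}(y)\,dy$ with $H = \mu F + (1-\mu)G$, the order-statistic identity $(1-H(y))\,f^{H}_{n-1,j}(y) = \frac{j}{n}f^{H}_{n,j+1}(y)$ to produce $\langle \psi, f^{H}_{n,j+1}\rangle$, and the vertex-of-a-simplex argument for the linear objective. Your additional bookkeeping (taking $w_n = 0$ without loss of generality and justifying that the full budget is spent) only makes explicit a step the paper leaves implicit in asserting the constraint $\sum_j \gamma_j = 1$.
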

\begin{proof}
The expected output of an agent in the target group is $\bE_{v \sim F}[\alpha(v)]$, and maximizing this quantity maximizes the total output produced by all agents in the target group because $\mu$ and $n$ are constants. Let $H(v) = \mu F(v) + (1-\mu) G(v)$.
\begin{align*}
    \bE_{v \sim F}[\alpha(v)] &= \int_0^1 \alpha(v) dF(v) \\
    &= \int_0^1 \left( \sum_{j \in [n-1]} \gamma_j \frac{1}{j} \int_0^v y f^{H}_{n-1,j}(y) dy \right) dF(v)\\
    &= \sum_{j \in [n-1]} \gamma_j \frac{1}{j} \int_0^1 \left( \int_y^1 dF(v) \right) y f^{H}_{n-1,j}(y) dy \\ 
    &= \sum_{j \in [n-1]} \gamma_j \frac{1}{j} \int_0^1 y (1-F(y)) f^{H}_{n-1,j}(y) dy. 
\end{align*}
As $\gamma_j \ge 0$ and $\sum_j \gamma_j = 1$, from the above expression we can observe that the optimal solution sets $\gamma_k > 0$ only if $\left(\frac{1}{j} \int_0^1 y (1-F(y)) f^{H}_{n-1,j}(y) dy \right)$ is maximized by $j = k$. Moreover, there is always an optimal solution where $\gamma_{k^*} = 1$ for some $k^* \in [n-1]$ and $\gamma_j = 0$ for $j \neq k^*$, where $k^*$ is given by:
\begin{align*}
    k^* &\in \argmax_{j \in [n-1]} \frac{1}{j} \int_0^1 y (1-F(y)) f^{H}_{n-1,j}(y) dy \\
    &= \argmax_{j \in [n-1]} \frac{1}{n} \int_0^1 y \frac{1-F(y)}{1-H(y)} f^{H}_{n,j+1}(y) dy \\
    &= \argmax_{j \in [n-1]} \langle \psi , f^{H}_{n,j+1} \rangle,
\end{align*}
where $\psi(x) = x \frac{1-F(x)}{1-H(x)}$.
\end{proof}

Let $H(x)$ denote $\mu F(x) + (1-\mu) G(x)$. The function $\psi(x) = x\frac{1-F(x)}{1 - H(x)}$ in Theorem~\ref{thm:desGen} is independent of the prize structure: it depends only on $\mu$, $F$, and $G$, which are parameters given to the contest designer. The function $f^{H}_{n,j+1}(x)$ is the PDF of the $(j+1)$-th highest order statistic of the distribution $H$. The optimal $j = k^*$ maximizes the inner product between $\psi$ and $f^H_{n,j+1}$, and therefore, the similarity between them. In other words, the optimal $j = k^*$ selects the order statistic that is as similar to $\psi$ as possible.

Note that when we are maximizing the total output, then instead of an inner product of the order statistic with $\psi(x)$ in Theorem~\ref{thm:desGen}, we take an inner product with just $x$. As $x$ is monotonically increasing, the order statistic that is most similar to $x$ and that maximizes the inner product with $x$ is the one with $j = 1$, as shown by Moldovanu and Sela~\shortcite{moldovanu2001optimal}. This provides an argument in favor of allocating the entire prize budget to the first prize. In contrast, when we focus on the target group only, it may be optimal to distribute the prize among several top-ranked agents. This also means that we may lose a portion of the total output, and this loss can be $\Omega(n)$ as shown in Example~\ref{ex:des:0}. Also, by the \textit{single-crossing} property, see Vojnovic~\cite{vojnovic2015contest} Chapter 3, if we flatten the prize structure then we increase the output of the agents with low ability and decrease the output of the agents with high ability.

We have partially omitted calculations from the examples in this section. 
These examples are repeated in the appendix with more details. 

\begin{example}\label{ex:des:0}
Let $F(x) = 1 - (1-x)^{n-1}$, $G(x) = ((n-1)x - F(x))/(n-2)$, and $\mu = 1/(n-1)$. We get $H(x) = \mu F(x) + (1-\mu) G(x) = x$, which is the uniform distribution over $[0,1]$. Note that 
$\psi(x) = \frac{x(1-F(x))}{1-H(x)} = x(1-x)^{n-2}$.

It can be checked that the order-statistic that maximizes the inner-product with $\psi(x)$ is $f^H_{n,n-1}(x) = n(n-1)x(1-x)^{n-1}$. So, the optimal value is $k^* = n-2$. The total output generated for $k^* = n-2$ is $\frac{2}{n(n+1)}$.
On the other hand, we know that the expected total output is maximized by $j=1$ and is equal to $\frac{(n-1)}{n(n+1)}$.
The ratio between the two quantities is $(n-1)/2 = \Omega(n)$.
\end{example}

In the above example, the distribution of the target group $F$ was first-order stochastic (FOS) dominated by the distribution of the general population $H$.
For this case, we observed that awarding prizes to more than one agent increases the expected output of the target group. $F$ being FOS dominated by $H$ means that the agents in the target group have lower ability than the general population. As flattening the prize structure increases the output of the lower ability agents and decreases the output of the higher ability agents, it makes sense that having a flatter prize structure incentivizes them.

The above result raises the question: What if the target group is equally able or stronger than the general population, i.e., if the distribution of the ability of an agent in the target group, $F$, FOS dominates the distribution of the ability of an agent in the non-target group, $G$; is it then optimal to award the prize to the top-ranked agent only? When $F = G = H$, we know that it is optimal to give prize to the top-ranked agent only, so when $F$ FOS dominates $H$, one might expect this to be the case as well. However, the example below illustrates that this is not true.
\footnote{In the preliminaries we assumed $F$ to be continuous, strictly increasing, and differentiable in $[0,1]$. In the following and the subsequent examples, the distribution $F$ is continuous and weakly increasing but may not be strictly increasing and differentiable everywhere. If $F$ is not strictly increasing, we can add a slight gradient to resolve the issue; on the other hand, we can smooth out the finite number of points where it is not differentiable. These changes will have a minimal effect on the objective value and our analysis holds. For ease of presentation, we shall not discuss these issues.}

\begin{example}\label{ex:des:2}
Let $\mu = 1/8$, $F(x) = 1 - S(x)$, $G(x) = (x-\mu F(x))/(1-\mu)$, where $S(x)$ is given by:
\[
    S(x) = \begin{cases} 1, &\text{ if $x < 3/4$} \\
    16(1-x)^2, &\text{ if $3/4 \le x < 15/16$} \\
    1-x, &\text{ if $x \ge 15/16$}
    \end{cases}.
\]
It can be verified that $F(x)$ and $G(x)$ are continuous cumulative distribution functions. The distribution of the general population is $H(x) = \mu F(x) + (1-\mu) G(x) = x$. Also, $F(x) \le H(x)$ for every $x$ (and strict for some values of $x$), and therefore, $F$ FOS dominates $H$. The optimal number of prizes $k^*$ need not be $1$; for example, for $n=50$, $k^* = 11$. 
\end{example}

Now we consider the case where $F$ and $H$ have the same mean but different variance, which implies second-order stochastic (SOS) dominance. The distribution with lower variance SOS dominates the one with higher variance, assuming both distributions have the same mean. The results are similar to FOS dominance: it may be optimal to have multiple prizes irrespective of whether $F$ or $H$ has a higher variance, as shown in the following examples. Note that FOS dominance implies SOS dominance, but not vice-versa.

\begin{example}\label{ex:des:3}
Let $\mu = 2/3$, $F(x) = 3x^2 - 2x^3$, and $G(x) = 3x - 6x^2 + 4x^3$. Observe that $H(x) = \mu F(x) + (1-\mu) G(x) = x$, and the variance of $H$ is $1/12$ while the variance of $F$ is $1/20$, and both have mean $1/2$.

Solving for $k^*$, we get $k^* = \frac{5n}{6} - \frac{\sqrt{7n^2+30n+39}}{6}+\frac{1}{2}$. For example, for $n=50$, $k^* = 19$.  
\end{example}

\begin{example}\label{ex:des:4}
 Let $\mu = 1/4$, $F(x) = 1-S(x)$, and $G(x) = (x-\mu F(x))/(1-\mu)$, where $S(x)$ is given by:
\[
    S(x) = \begin{cases} 1-48x/31, &\text{ if $x < 31/96$} \\
    1/2, &\text{ if $31/96 \le x < 3/4$} \\
    8(1-x)^2, &\text{ if $3/4 \le x < 7/8$} \\
    1-x, &\text{ if $x \ge 7/8$}
    \end{cases}
\]
It can be checked that $F(x)$ and $G(x)$ are valid and for the distribution of the general population we have $H(x) = \mu F(x) + (1-\mu) G(x) = x$. It can also be checked that the mean of all the distributions is $1/2$, and the variance of $H$ is $1/12 \approx 0.083$ while the variance of $F$ is $6703/55296 \approx 0.121 > 1/12$. The optimal number of prizes $k^*$ need not be $1$; for example, for $n=50$, $k^* = 11$. 
\end{example}

\subsection{Group-Specific Prizes Only}\label{sec:desTarget}
In this section, we study the optimal prize structure when the prizes are accessible to the target group only. This models situations when there are separate contests for the target and the non-target group (there can be a separate contest for only non-target group agents, this will not have any affect on the strategy of the target group agents). We see that the optimal contest allocates the entire prize budget to the first prize, i.e., gives a prize of $1$ to the top-ranked agent in the group (Theorem~\ref{thm:desTarget}). The proof of Theorem~\ref{thm:desTarget} extends the techniques used to provide a similar result for total output~\cite{moldovanu2001optimal}.

\begin{theorem}\label{thm:desTarget}
The contest with only group-specific prizes that maximizes the output of the target group awards a prize of $1$ to the top-ranked agent and $0$ to others.
\end{theorem}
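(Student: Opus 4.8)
The plan is to follow the template of the proof of Theorem~\ref{thm:desGen}, exploiting the special structure of group-specific prizes: the relevant weighting function degenerates to a monotone linear function, which recovers the classical winner-take-all conclusion of Moldovanu and Sela.

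\textbf{Setup and reparametrization.} First I would start from the equilibrium strategy $\alpha(v)$ of Theorem~\ref{thm:eqTarget}, writing $H(v) = \mu F(v) + (1-\mu)$. Reparametrizing by $\gamma_j = j(\omega_j - \omega_{j+1})$ for $j \in [n-1]$, the ordering constraint $\omega_1 \ge \cdots \ge \omega_n \ge 0$ becomes $\gamma_j \ge 0$. Setting $\omega_n = 0$ is without loss of optimality, since it relaxes the budget while weakly increasing $\alpha$ (which depends on $\omega_n$ only through the non-negative coefficient $\omega_{n-1} - \omega_n$); a summation-by-parts computation then turns the budget $\sum_j \omega_j \le 1$ into $\sum_{j \in [n-1]} \gamma_j \le 1$. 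This renders $\alpha(v) = \sum_{j \in [n-1]} \gamma_j \frac{1}{j}\int_0^v y f^{H}_{n-1,j}(y)\,dy$ a linear expression in $\gamma$ over a simplex.

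\textbf{Computing and simplifying the objective.} Next I would evaluate $\bE_{v\sim F}[\alpha(v)]$ by exchanging the order of integration exactly as in the proof of Theorem~\ref{thm:desGen}, obtaining a linear objective $\sum_{j \in [n-1]} \gamma_j c_j$ with $c_j = \frac{1}{j}\int_0^1 y(1-F(y)) f^{H}_{n-1,j}(y)\,dy$. The crucial step --- the one that collapses the group-specific case to a single prize --- is the identity $1 - F(y) = (1-H(y))/\mu$, which holds precisely because $H = \mu F + (1-\mu)$. Substituting this and applying the order-statistic identity $(1-H(y)) f^{H}_{n-1,j}(y) = \frac{j}{n} f^{H}_{n,j+1}(y)$ from the preliminaries yields
\[
c_j = \frac{1}{\mu n}\int_0^1 y\, f^{H}_{n,j+1}(y)\,dy = \frac{1}{\mu n}\,\bE\big[X_{(j+1)}\big],
\]
where $X_{(j+1)}$ denotes the $(j+1)$-th highest of $n$ i.i.d.\ draws from $H$ (the atom of $H$ at $0$ contributes nothing to this integral, as it sits at the minimal value).

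\textbf{Optimizing and concluding.} Since the objective is linear over the simplex $\{\gamma : \gamma_j \ge 0,\ \sum_j \gamma_j \le 1\}$, an optimum is attained at a vertex, $\gamma_{k^*} = 1$ and $\gamma_j = 0$ otherwise, with $k^* \in \argmax_{j \in [n-1]} c_j$. Because $\bE[X_{(j+1)}]$ is non-increasing in $j$ (pointwise $X_{(j+1)} \ge X_{(j+2)}$ always), the coefficients $c_j$ are non-increasing, so $k^* = 1$ is optimal. Unwinding the reparametrization, $\gamma_1 = 1$ with all other $\gamma_j = 0$ corresponds to $\omega_1 = 1$ and $\omega_2 = \cdots = \omega_n = 0$, i.e., the entire budget is awarded to the top-ranked agent, as claimed. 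The content here is an observation rather than a genuine obstacle: once one notices the collapse $1 - F = (1-H)/\mu$, the weighting function becomes the monotone $y/\mu$ and winner-take-all follows, in sharp contrast to Theorem~\ref{thm:desGen}, where $\frac{1-F}{1-H}$ is non-constant and spreading the budget across several prizes can be optimal. The only technical care required is the harmless handling of the atom of $H$ at $0$.
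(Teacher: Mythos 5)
Your proof is correct, and up to the computation of the coefficients it follows the same skeleton as the paper's: the reparametrization $\gamma_j = j(\omega_j - \omega_{j+1})$, the exchange of integration order, the substitution $1-F(y) = (1-H(y))/\mu$, and the order-statistic identity $(1-H(y))\,f^H_{n-1,j}(y) = \tfrac{j}{n} f^H_{n,j+1}(y)$ all appear in the paper too, leading to the same linear objective with coefficients $c_j = \tfrac{1}{\mu n}\int_0^1 y\, h_{n,j+1}(y)\,dy$. Where you genuinely diverge is the decisive step showing $c_1 \ge c_j$ for all $j$. The paper does this in two stages: first a mass comparison, $\int_0^1 h_{n,2}(y)\,dy = 1-H_{n,2}(0) \ge 1-H_{n,j+1}(0) = \int_0^1 h_{n,j+1}(y)\,dy$ (these are defective densities because of the atom at $0$), and then a single-crossing argument based on monotonicity of the ratio $h_{n,2}(y)/h_{n,j+1}(y)$, combined with the fact that the weight $y$ is increasing. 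You instead observe that $\int_0^1 y\,h_{n,j+1}(y)\,dy = \bE[X_{(j+1)}]$ exactly --- the atom of the order statistic sits at the value $0$ and contributes nothing to the expectation --- and conclude from the sure pointwise ordering $X_{(2)} \ge X_{(j+1)}$ of order statistics of the same $n$ draws. This is a cleaner and more elementary route: it replaces the single-crossing machinery (inherited from Moldovanu and Sela) with a one-line observation, and it works verbatim for any increasing weight $\phi$ with $\phi(0)=0$ in place of the identity. What the paper's heavier argument buys is the single-crossing structure itself, which is the kind of tool needed when the weight is not monotone or the comparison cannot be reduced to pointwise domination --- exactly the situation of Theorem~\ref{thm:desGen}, where $\tfrac{1-F}{1-H}$ is non-constant and no such shortcut exists; your argument correctly isolates the special feature of the group-specific case that makes winner-take-all drop out. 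A minor point in your favour: you make explicit that setting $\omega_n = 0$ is without loss and that the budget constraint becomes $\sum_j \gamma_j \le 1$, bookkeeping the paper leaves implicit.
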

\begin{proof}
From the equilibrium characterization in Theorem~\ref{thm:eqTarget}, we have
\begin{align*}
    \alpha(v) &= \sum_{j \in [n-1]} (\omega_j - \omega_{j+1}) \int_0^v y f_{n-1,j}^{\mu F + (1-\mu)}(y) dy \\
    &= \sum_{j \in [n-1]} \gamma_j \frac{1}{j} \int_0^v y f_{n-1,j}^{\mu F + (1-\mu)}(y) dy,
\end{align*}
where $\gamma_j = j(\omega_j - \omega_{j+1}) \ge 0$. 

Let $H(x) = (1-\mu) + \mu F(x)$. We have $h(x) = \mu f(x)$ for $x > 0$ and $(1-H(x)) = \mu(1-F(x))$. Let 
$$
H_{n,j}(x) = \sum_{\ell = 0}^{j-1} \binom{n}{\ell} H(x)^{n-\ell} (1-H(x))^{\ell}
$$
be the distribution of the $j$-th highest order statistic of $H(x)$, and let $h_{n,j}(x)$ be its PDF, which is well-defined for $x > 0$, but at $x = 0$, $H_{n,j}(0)$ has a non-zero point mass. Note that $f_{n,j}^{\mu F + (1-\mu)}(x) = h_{n,j}(x)$.

The objective of the designer is $\bE_{v \sim F}[\alpha(v)]$
\begin{align*}
    &= \int_0^1 \left( \sum_{j \in [n-1]} \gamma_j \frac{1}{j} \int_0^v y h_{n-1,j}(y) dy \right) dF(v)
\end{align*}
\begin{align*}
    &= \sum_{j \in [n-1]} \gamma_j \frac{1}{j}  \int_0^1 y (1-F(y)) h_{n-1,j}(y) dy\\
    &= \sum_{j \in [n-1]} \gamma_j \frac{1}{\mu j}  \int_0^1 y (1-H(y)) h_{n-1,j}(y) dy \\ 
    &= \sum_{j \in [n-1]} \gamma_j \frac{1}{\mu n}  \int_0^1 y h_{n,j+1}(y) dy. 
\end{align*}
As $1/(\mu n)$ is a constant, to maximize the objective, the designer sets $\gamma_j > 0$ only for values of $j$ that maximize $\int_0^1 y h_{n,j+1}(y) dy$. From the definition of $H_{n,k}$, we have 
\begin{multline*}
    1 = H_{n,j+1}(1) = H_{n,j+1}(0) + \int_0^1 h_{n,j+1}(y) dy \\
        \implies \int_0^1 h_{n,j+1}(y) dy = 1 - H_{n,j+1}(0).
\end{multline*}
We also know that for any $x$ and $j < n$,
\begin{multline*}
    H_{n,j}(x) = \sum_{\ell = 0}^{j-1} \binom{n}{\ell} H(x)^{n-\ell} (1-H(x))^{\ell} \\
    \le \sum_{\ell = 0}^{j} \binom{n}{\ell} H(x)^{n-\ell} (1-H(x))^{\ell} = H_{n,j+1}(x).
\end{multline*}
Using this, we get
\begin{align}\label{eq:design:group:1}
    H_{n,1+1}(x) &\le H_{n,j+1}(x) \nonumber \\
    &\implies 1 - H_{n,2}(0) \ge 1 - H_{n,j+1}(0) \nonumber \\
    &\implies \int_0^1 h_{n,2}(y) dy \ge \int_0^1 h_{n,j+1}(y) dy,
\end{align}
for any $j \in [n-1]$. Observe that $h_{n,2}(y)$ is either \textit{single-crossing} with respect to $h_{n,j+1}(y)$, i.e., there is a point $x \in [0,1]$ s.t. $h_{n,2}(y) \le h_{n,j+1}(y)$ for $y \le x$ and $h_{n,2}(y) > f^H_{n,j+1}(y)$ for $y > x$, or $h_{n,2}(y)$ is always greater than $h_{n,j+1}(y)$, because
\begin{align*}
    \frac{h_{n,2}(y)}{h_{n,j+1}(y)} &= \frac{\binom{n-1}{1} H(y)^{n-2} (1-H(y))^{1} }{\binom{n-1}{j} H(y)^{n-j-1} (1-H(y))^{j}} \\
    &= \frac{(n-j-1)! j!}{(n-2)! (1-H(y))^{j-1}} \left( \frac{H(y)}{1-H(y)} \right)^{j-1},
\end{align*}
$\frac{H(y)}{1-H(y)}$ is strictly increasing and goes from $\frac{H(0)}{1-H(0)}$ to $\infty$ as $y$ goes to $1$, and the coefficient of $\left( \frac{H(y)}{1-H(y)} \right)^{j-1}$ is a constant. Using: (i) the identity function $y$ is an increasing function, (ii) $h_{n,2}(y)$ is single-crossing w.r.t. $h_{n,j+1}(y)$, and (iii) equation~(\ref{eq:design:group:1}), we have 
\begin{equation*}
    \int_0^1 y h_{n,2}(y) dy \ge \int_0^1 y h_{n,j+1}(y) dy.
\end{equation*}
So, it is optimal to only have a first prize.
\end{proof}
\section{Conclusion}
Our paper studies rank-order allocation prizes that aim to maximize the output of a target group. We provide equilibrium characterization for these contests and study the properties of the optimal contest.
For unbiased contests (i.e., with general prizes only), although it is preferable to award the prize to the top-performing player if the target group players are \textit{similar} to the overall population, if the target group players are very \textit{different}, then we may want to flatten out the prize structure. This does not only mean that we flatten out the prizes if the target group players are \textit{weaker} than others; we also demonstrate, if the target group players are \textit{stronger} but quite different than others, then also it may be a good idea to flatten out the prizes.
But for contests with fixed prize quotas (i.e., with group-specific prizes only), it may not be useful to flatten the prizes.
Regarding the choice between unbiased contests and fixed quota contests, the size of the target group in proportion to the size of the population plays an important role.

An important open problem is to understand the properties of the optimal contest that has both group-specific and general prizes. We characterize the equilibrium for this problem and analyze the optimal contest for specific sub-cases, but the general problem is open. One way to make this tractable may be to assume large numbers of agents and prizes; such assumptions allow for a stronger and more tractable equilibrium characterization by making the influence of an individual agent infinitesimal~\cite{olszewski2016large}. 
Our work focused on contests with a rank-order allocation of prizes with incomplete information and linear cost functions. 
Instead of rank-order allocation of prizes, we can study a proportional allocation of prizes or its generalizations (see, e.g., \cite{tullock1980efficient}); one motivation for proportional allocation is the randomness and unpredictability of outcome in the real world. Similarly, we may relax the assumption of linear cost functions, or we can study complete information models instead of incomplete information.

\subsection*{Ethical Considerations}
There may be both short-term and long-term implications for designing contests specifically to incentivize a target group. For example, even if the contest itself is unbiased (Section \ref{sec:desGen}), optimizing the expected total output of the players in the target group may mean that we decrease the expected total output of the entire population, and in particular decrease the expected total output of the non-target group. Moreover, such effects can be stronger if the target group forms a smaller fraction of the participants. Our analysis and examples try to capture such effects. For any given contest, this means that we are optimizing for the target group with a cost to society. But for a longer time frame, the answer is less obvious, and we believe that it is out of the scope of our paper. This question concerns whether practices like affirmative action have an overall positive or negative impact on our society in the long run. Our paper addresses the question: if we do want to use affirmative action in contests to incentivize the target group (with particular design choices, like the prize allocation scheme is unbiased), how to do it most effectively, and how it affects the output of all the players in the given contest.


\bibliography{ref}

\cleardoublepage 
\appendix


\section{Examples}
\subsection{Equilibrium Analysis}
The following example illustrates the use of Theorem~\ref{thm:eqBoth} to compute $\alpha(v)$ and $\beta(v)$.
\begin{example}\label{ex:eq:1}
    Let $\mu = 1/2$, $F(v) = v^s$, and $G(v) = v^t$, for $s,t > 0$. Let $w_1 = 1/2$, $w_j = 0$ for $j > 1$, $\omega_1 = 1/2$, and $\omega_j = 0$ for $j > 1$. 
    
    We get $\mu F(k(v)) + (1-\mu) G(v) = (k(v)^s + v^t)/2$, $\mu F(k(v)) + (1-\mu) = (k(v)^s + 1)/2$, and $\mu F(v) + (1-\mu) = (v^s + 1)/2$. Further, we get
    \begin{align*}
        A(v) &= \sum_{j \in [n]} w_j \psi_j(\mu F(k(v)) + (1-\mu) G(v)) \\
            &= w_1 \psi_1(\mu F(k(v)) + (1-\mu) G(v)) \\
            &= \frac{1}{2} (n-1) \left( \frac{k(v)^s + v^t}{2} \right)^{n-2}.
    \end{align*}
    Similarly, we get $B(v) = \frac{1}{2} (n-1) \left( \frac{k(v)^s + 1}{2} \right)^{n-2}$ and $C(v) = (n-1) \left( \frac{v^s + 1}{2} \right)^{n-2}$. Putting it all together, we get
    \begin{equation*}
        k'(v) = \frac{t v^{t-1} (v-k(v)) (k(v)^s+v^t)^{n-2}}{%
        \splitfrac{s k(v)^{s-1} (k(v) (k(v)^s+1)^{n-2}}
        { - (v-k(v)) (k(v)^s+v^t)^{n-2})}}.
    \end{equation*}
    Numerical solution for $k(v)$ for $n=5$, $s=1/2$, and $t=1$ is given in Figure~\ref{fig:eq1}. Once we have $k(v)$, we compute $\beta(v)$ and $\alpha(v)$ using
    \begin{align*}
        \beta(v) &= \int_0^v \frac{n-1}{2} \frac{s k(y)^{s-1} k'(y) + t y^{t-1}}{2} \\
        & \qquad \qquad \qquad \qquad \qquad \qquad \left(\frac{k(y)^s+y^t}{2}\right)^{n-2} y d y, \\
        \alpha(v) &= \begin{cases} \beta(k^{-1}(v)), \qquad \qquad \qquad \qquad \ \ \ \text{$0 \le v \le k(1)$} \\
                \beta(1) + \int_{k(1)}^{v} (n-1) \frac{s y^{s-1}}{2} \left(\frac{y^s+1}{2}\right)^{n-2} y dy, \\
                \qquad \qquad \qquad \qquad \qquad \qquad \quad \ \ \text{$k(1) \le v \le 1$}
                \end{cases}
    \end{align*}
    given in Figure~\ref{fig:eq2}. Further, we can use it to calculate the expected output of the players, for a player in the target group it is $\bE_{v \sim F}[\alpha(v)] \approx 0.103$ and for a player in the non-target group player it is $\bE_{v \sim G}[\beta(v)] \approx 0.052$.
    
\begin{figure}[t]

\includegraphics[width=\columnwidth]{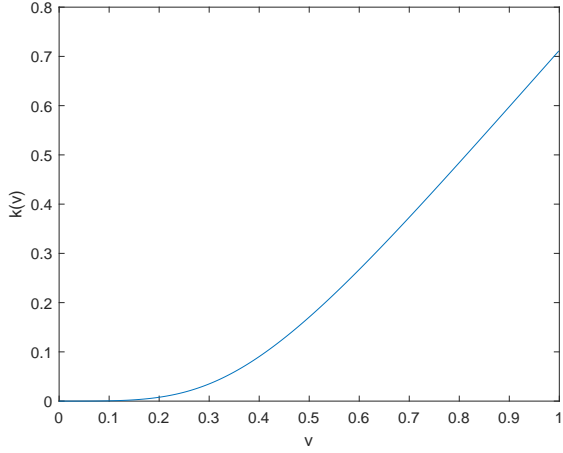} 
\caption{$k(v)$ vs $v$ (Example~\ref{ex:eq:1}).}
\label{fig:eq1}
\end{figure}

\begin{figure}[t]

\includegraphics[width=\columnwidth]{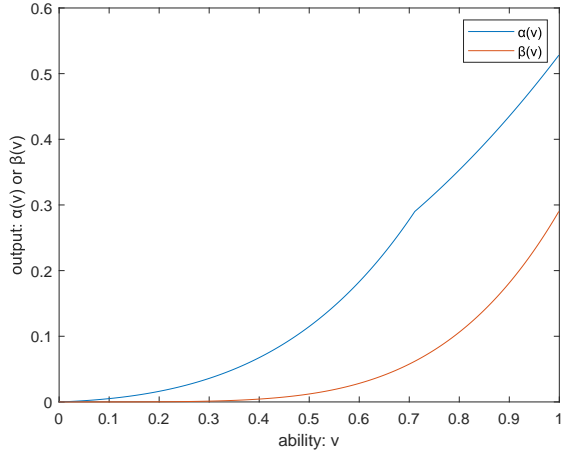} 
\caption{$\alpha(v)$ and $\beta(v)$ vs $v$ (Example~\ref{ex:eq:1}).}
\label{fig:eq2}
\end{figure}

\end{example}

\subsection{Further Details on Examples from Section~\ref{sec:desGen}}
The examples from Section~\ref{sec:desGen} have be repeated here with more details.

\begin{example}[Example~\ref{ex:des:0}]
Let $F(x) = 1 - (1-x)^{n-1}$, $G(x) = ((n-1)x - F(x))/(n-2)$, and $\mu = 1/(n-1)$. We get $H(x) = \mu F(x) + (1-\mu) G(x) = x$, which is the uniform distribution over $[0,1]$. Note that 
\[
    \psi(x) = \frac{x(1-F(x))}{1-H(x)} = \frac{x(1-x)^{n-1}}{(1-x)} = x(1-x)^{n-2}.
\]
It can be checked that the order-statistic that maximizes the inner-product with $\psi(x)$ is $f^H_{n,n-1}(x) = n(n-1)x(1-x)^{n-1}$. So, the optimal value is $k^* = n-2$. The total output generated for $k^* = n-2$ is
\begin{multline*}
    \frac{1}{n} \int_0^1 x f^{H}_{n,n-1}(x) dx \\
    = \frac{2}{n(n+1)} \int_0^1 f^{H}_{n+1,n-1}(x) dx = \frac{2}{n(n+1)}.
\end{multline*}
On the other hand, we know that the expected total output is maximized by $j=1$ and is equal to
\begin{multline*}
    \frac{1}{n} \int_0^1 x f^{H}_{n,2}(x) dx \\
    = \frac{(n-1)}{n(n+1)} \int_0^1 f^{H}_{n+1,2}(x) dx = \frac{(n-1)}{n(n+1)}.
\end{multline*}
The ratio between the two quantities is $(n-1)/2 = \Omega(n)$.
\end{example}


\begin{example}[Example~\ref{ex:des:2}]
Let $\mu = 1/8$, $F(x) = 1 - S(x)$, $G(x) = (x-\mu F(x))/(1-\mu)$, where $S(x)$ is given by:
\[
    S(x) = \begin{cases} 1, &\text{ if $x < 3/4$} \\
    16(1-x)^2, &\text{ if $3/4 \le x < 15/16$} \\
    1-x, &\text{ if $x \ge 15/16$}
    \end{cases}.
\]
It can be verified that $F(x)$ and $G(x)$ are continuous cumulative distribution functions. The distribution of the general population is $H(x) = \mu F(x) + (1-\mu) G(x) = x$. Also, $F(x) \le H(x)$ for every $x$ (and strict for some values of $x$), and therefore, $F$ FOS dominates $H$. The objective value is:
\begin{align*}
    &\frac{1}{j} \left( \int_0^{3/4} x f^H_{n-1,j}(x) dx \right. \\
    & \quad + \int_{3/4}^{15/16} 16x(1-x)^2 f^H_{n-1,j}(x) dx \\
    & \quad \left. + \int_{15/16}^1 x(1-x) f^H_{n-1,j}(x) dx \right) \\
    =& \ \frac{n-j}{j \cdot n} F^H_{n,j}\left(3/4\right)\\
    +& \frac{16(n-j)(j+1)}{n(n+1)(n+2)} (F^H_{n+2,j+2}(15/16) - F^H_{n+2,j+2}(3/4)) \\
    & \quad + \frac{n-j}{n(n+1)} \left(1-F^H_{n+1,j+1}\left(15/16\right)\right),
\end{align*}
where $F^H_{n,j}$ is the CDF corresponding to the PDF $f^H_{n,j}$.
Note that $F^H_{n,j}$ is equal to the Beta distribution with parameters $n+1-j$ and $j$. The optimal value for $j$ may be derived with an involved calculation, but for our purpose, we just want to illustrate that the optimal $j$ is not $1$; it can be checked that for $n=50$, the optimal value for $j$ is $11$ and the optimal objective value is approximately $0.0498$.
\end{example}

\begin{example}[Example~\ref{ex:des:3}]
Let $\mu = 2/3$, $F(x) = 3x^2 - 2x^3$, and $G(x) = 3x - 6x^2 + 4x^3$. Observe that $H(x) = \mu F(x) + (1-\mu) G(x) = x$, and the variance of $H$ is $1/12$ while the variance of $F$ is $1/20$, and both have mean $1/2$. The objective value is given by:
\begin{multline*}
    \frac{1}{j} \int_0^1 x(1-3x^2+2x^3) f^H_{n-1,j}(x) dx \\
     = \frac{n-j}{j\cdot n} \left( 1 - \frac{(n+1-j)(n+2-j)}{(n+1)(n+2)} \right. \\
    \left. \left(3 - 2 \cdot \frac{n+3-j}{n+3} \right) \right).
\end{multline*}
Solving for $j$, we get $j = \frac{5n}{6} - \frac{\sqrt{7n^2+30n+39}}{6}+\frac{1}{2}$; as $j$ is integer, either the floor or the ceiling of this number is optimal. For example, for $n=50$, the optimal value is $19$. 
\end{example}

\begin{example}[Example~\ref{ex:des:4}]
 Let $\mu = 1/4$, $F(x) = 1-S(x)$, and $G(x) = (x-\mu F(x))/(1-\mu)$, where $S(x)$ is given by:
\[
    S(x) = \begin{cases} 1-48x/31, &\text{ if $x < 31/96$} \\
    1/2, &\text{ if $31/96 \le x < 3/4$} \\
    8(1-x)^2, &\text{ if $3/4 \le x < 7/8$} \\
    1-x, &\text{ if $x \ge 7/8$}
    \end{cases}
\]
It can be checked that $F(x)$ and $G(x)$ are valid and for the distribution of the general population we have $H(x) = \mu F(x) + (1-\mu) G(x) = x$. It can also be checked that the mean of all the distributions is $1/2$, and the variance of $H$ is $1/12 \approx 0.083$ while the variance of $F$ is $6703/55296 \approx 0.121 > 1/12$. The objective value is:
\begin{align*}
    &\frac{1}{j} \left( \int_0^{31/96} x \left(1-\frac{48}{31}x\right) f^H_{n-1,j}(x) dx \right. \\
    &\qquad + \int_{31/96}^{3/4} x\cdot\frac{1}{2}\cdot f^H_{n-1,j}(x) dx \\
    &\qquad + \int_{3/4}^{7/8} x\cdot 8\cdot (1-x)^2 f^H_{n-1,j}(x) dx \\
    &\qquad \left. + \int_{7/8}^{1} x (1-x) f^H_{n-1,j}(x) dx \right) \\
    &= \frac{n-j}{2\cdot j\cdot n} \left(F^H_{n,j}\left(31/96\right) + F^H_{n,j}\left(3/4\right)\right) \\
    &\qquad - \frac{48(n-j)(n+1-j)}{31 \cdot j \cdot n \cdot (n+1)} F^H_{n+1,j}\left(31/96\right) \\
    &\quad+ \frac{8(n-j)(j+1)}{n(n+1)(n+2)} (F^H_{n+2,j+2}(7/8) - F^H_{n+2,j+2}(3/4)) \\
    & \qquad + \frac{n-j}{n(n+1)} \left(1-F^H_{n+1,j+1}\left(7/8\right)\right).
\end{align*}
Note that $F^H_{n,j}$ is the Beta distribution with parameters $n+1-j$ and $j$. 
For $n=50$, the optimal value for $j$ is $11$ and the optimal objective value is approximately $0.0249$.
\end{example}

\subsection{Comparing Group-Specific and General Prizes}\label{sec:desCompare}
In this section, we compare the choice among only group-specific prizes and only general prizes. This choice depends upon the distributions $F$ and $G$, and the parameter $\mu$ that captures the relative frequency of target-group agents. Even if we factor out the effect of the distributions $F$ and $G$, then also the choice among group-specific and general prizes depends upon $\mu$. If $\mu$ is high enough, then it is better to have group-specific prizes. On the other hand, if $\mu$ is quite low, then having only group-specific prizes leads to very low competition, and therefore, a low output is produced by the agents, and using general prizes is better than group-specific prizes. The example below illustrates this.

\begin{example}
Let $F(x) = G(x) = x$. As shown in Theorem~\ref{thm:desTarget}, it is optimal to allocate the entire budget to the first prize for group-specific prizes. For $F=G$, a similar result holds for general prizes as well.

We compare the following three cases:
\begin{enumerate}
    \item[A] Only general prizes of value $1$; $w_1 = 1$, $w_j = 0$ for $j > 1$, $\omega_j = 0$ for $j \in [n]$.
    \item[B] Only target group-specific prizes of value $1$; $w_j = 0$ for $j \in [n]$, $\omega_1 = 1$, $\omega_j = 0$ for $j > 1$.
    \item[C] Only target group-specific prizes of value $\mu$; $w_j = 0$ for $j \in [n]$, $\omega_1 = \mu$, $\omega_j = 0$ for $j > 1$.
\end{enumerate}
As $\mu$ is the fraction of target group agents in the population, the contest organizer may want to have a contest for the target group agents with a prize of $\mu$. This distributes the prize of $1$ equally between the target group and the non-target group. Case (C) above captures this scenario. 

Let $\alpha_A(v)$, $\alpha_B(v)$, and $\alpha_C(v)$ be the output generated by a player with ability $v$ for the three cases A, B, and C, respectively. 

The expected output for case (A) general prizes is:
\begin{multline*}
    \bE_{v \sim F}[\alpha_A(v)] = \int_0^1 v (1-F(v)) f^F_{n-1,1}(v) dv \\
    =  (n-1) \int_0^1 (1-v) v^{n-1} dv = \frac{n-1}{n(n+1)}.
\end{multline*}

The expected output for case (B) group-specific prizes of value $1$ is:
\begin{multline*}
    \bE_{v \sim F}[\alpha_B(v)] = \int_0^1 v (1-F(v)) f^{1-\mu + \mu F}_{n-1,1}(v) dv \\
    = (n-1) \int_0^1 v (1-v) (1 - \mu + \mu v)^{n-2} \mu dv.
\end{multline*}
Let $H(x) = 1 - \mu + \mu F(x) = 1-\mu + \mu x$. We have $(1-H(x)) = \mu (1-x)$ and $h(x) = \mu$. Let us write $f^H_{n,j}(x)$ as $h_{n,j}(x)$ and $F^H_{n,j}(x)$ as $H_{n,j}(x)$. Now,
\[
    v = \frac{1}{\mu} \mu v = \frac{1}{\mu} ( (1-\mu + \mu v) - (1-\mu)) = \frac{1}{\mu} (H(v) - (1-\mu)).
\]
Incorporating this into the expected output formula, we get
\begin{align*}
    &\bE_{v \sim F}[\alpha_B(v)] \\
    =& (n-1) \int_0^1 v (1-v) (1 - \mu + \mu v)^{n-2} \mu dv \\
    =& \frac{(n-1)}{\mu^2} \\
    &\qquad \int_0^1 (H(v) - (1-\mu)) (1-H(v)) H(v)^{n-2} h(v) dv  \\
    =& \frac{1}{\mu^2} \int_0^1 (H(v) - (1-\mu)) (1-H(v)) h_{n-1,1}(v)  dv \\
    =& \frac{1}{n \mu^2} \int_0^1 (H(v) - (1-\mu)) h_{n,2}(v)  dv  \\
    =& \frac{1}{n \mu^2} \left( \int_0^1 H(v) h_{n,2}(v) dv - (1-\mu) \int_0^1 h_{n,2}(v) dv \right) \\
    =& \frac{1}{n \mu^2} \bigg( \frac{n-1}{n+1} \int_0^1 h_{n+1,2}(v) dv - (1-\mu) (1- H_{n,2}(0)) \bigg) \\
    =& \frac{1}{n \mu^2} \bigg( \frac{n-1}{n+1} (1-H_{n+1,2}(0)) - (1-\mu) (1- H_{n,2}(0)) \bigg).
\end{align*}
Now, $H_{n,2}(0) = H(0)^n + n (1-H(0)) H(0)^{n-1} = (1-\mu)^n + n \mu (1-\mu)^{n-1} = (1-\mu)^{n-1}(1-\mu + n \mu)$. Similarly, $H_{n+1,2}(0) = (1-\mu)^n(1-\mu + (n+1) \mu)$. Replacing this, we get
\begin{align*}
&= \frac{1}{n \mu^2} \bigg( \frac{n-1}{n+1} (1-(1-\mu)^n(1-\mu + (n+1) \mu))  \\
    &\qquad \qquad \qquad - (1-\mu) (1 - (1-\mu)^{n-1}(1-\mu + n \mu)) \bigg) \\
&= \frac{1}{n (n+1) \mu^2} ( (n-1) - (1-\mu)(n+1) \\
    & \qquad \qquad \qquad + (1-\mu)^n (2(1-\mu) + (n+1)\mu) ).
\end{align*}
Although the solution above may seem difficult to interpret, but one can observe that as $\mu \rightarrow 0$, it goes to $0$, but as $\mu \rightarrow 1$, it reaches $(n-1)/(n(n+1))$ from above.

\begin{figure}[t]

\includegraphics[width=\columnwidth]{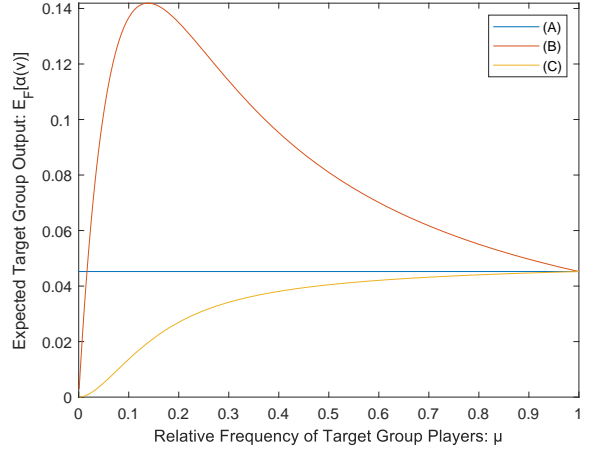} 
\caption{Expected Target Group Output vs Relative Frequency of Target Group Players. (A) Only general prizes of value $1$. (B) Only target group-specific prizes of value $1$. (C) Only target group-specific prizes of value $\mu$. }
\label{fig1}
\end{figure}

We can observe that the expected output for case (C) group-specific prizes of value $\mu$ is equal to $\mu$ times the expected output for case (B):
\begin{multline*}
    \bE_{v \sim F}[\alpha_C(v)] = \mu \bE_{v \sim F}[\alpha_B(v)] \\
    = \frac{1}{n (n+1) \mu} ( (n-1) - (1-\mu)(n+1) \\
         + (1-\mu)^n (2(1-\mu) + (n+1)\mu) ).
\end{multline*}
The derivative of the above expression is non-negative in $\mu$ and so it is increasing, for $\mu \in [0,1]$. For $\mu \rightarrow 0$, it goes to $0$, and for $\mu \rightarrow 1$, it goes to $(n-1)/(n(n+1))$. So, $\bE_{v \sim F}[\alpha_C(v)] \le (n-1)/(n(n+1))$ for $\mu \in [0,1]$.

Figure~\ref{fig1} plots the expected output for the three cases for $n=20$. 

We observe that case (A) always dominates case (C), so a general prize of $1$ is better than a group-specific prize of $\mu$, although in both cases the amount of prize in the contest per target group agent is the same. This happens because of higher competition in case (A): for general prizes, the target group agents produce higher output to compete with the non-target group agents.

Comparing case (A) general prize of $1$ and (B) group-specific prize of $1$, we see that if $\mu$ is very small, then case (A) dominates case (B), and the opposite is true if $\mu$ is sufficiently large. This happens because of too little competition in case (B) for very small $\mu$.


\end{example}

\end{document}